\newtheorem{theorem}{Theorem}
\newtheorem{lemma}[theorem]{Lemma}
\newtheorem{corollary}[theorem]{Corollary}
\newtheorem{definition}[theorem]{Definition}
\def\F {\ensuremath{\mathbb{F}}}
\def\P {\ensuremath{\mathbb{P}}}
\def\S {\ensuremath{\mathbb{S}}}
\def\dirsum{\oplus}
\newcommand{\Pbr}[2] {\ensuremath{P_{q,#1,#2}}}
\DeclareMathOperator{\diag}{diag}
\DeclareMathOperator{\snf}{snf}
\DeclareMathOperator{\minpoly}{min-poly}
\DeclareMathOperator{\rank}{rank}
\begin{document}

\title{Probabilistic Analysis of Block Wiedemann for Leading Invariant Factors}

\author{Gavin Harrison \\
Drexel University \and
Jeremy Johnson\\
Drexel University \and
B. David Saunders\\
University of Delaware}

\maketitle

\begin{abstract}
We determine the probability, structure dependent, that the block Wiedemann algorithm correctly computes leading invariant factors.  This leads to a tight lower bound for the probability, structure independent.
We show, using block size slightly larger than $r$, that the leading $r$ invariant factors
are computed correctly with high probability over any field.  Moreover, an algorithm is provided to
compute the probability bound for a given matrix size and thus to select the block size
needed to obtain the desired probability.  The worst case probability bound is improved,
post hoc, by incorporating the partial information about the invariant factors.
\end{abstract}


\section{Introduction}

For prime power $q$, let $\F$ denote a finite field of cardinality $q$.  
Let $A$ be a $n\times n$ matrix over $\F$. 
For chosen block size $b$, let $U,V$ be uniformly random in $\F^{n \times b}$.
Call the sequence $S = \{U^T A^i V\}_{i}, i \in \{0,1,\ldots\}$ the {\em (U,V)-projection} of $A$.  
The Wiedemann ($b=1$) and Coppersmith's block Wiedemann ($b>1$) algorithms compute the minimal generating polynomial, $G \in \F[x]^{b \times b}$, of $S$.
This means that 
$\sum_{i=0}^d S_{k+i} G_i = 0$, 
 for all $k\geq 0$, where $d = \deg(\det(G)) > 0$, 
and $d$ is minimal
\citep{Wiedemann86,Coppersmith94}.  
Then
the $i$-th largest invariant factor of $G$ divides the $i$-th largest invariant factor of $xI-A$, and is equal with high probability \citep{KaVi:2001, KaVi:2004} for large enough field.  
Observations of the behavior for small fields were noted by Coppersmith and the analysis has been extended by \cite{Villard97Further, Villard97} and \cite{Brent:2003} to small fields subject to certain constraints.
We call a projection and it's minimal generator, $G$, {\em r-faithful} to $A$ if the $r$ largest invariant factors of $G$ are the $r$ largest invariant factors of $xI-A$.  

Wiedemann and Coppersmith developed their algorithms for the purpose of solving linear systems and weren't explicitly concerned with determining invariant factors.
Prior analysis of the block Wiedemann algorithm was motivated by this problem, and is therefore one-sided (asymmetric in the treatment of projection from left and right).  Given $X \in \F^{b \times n}$ and $Y \in \F^{n \times m}$ chosen uniformly at random, where $b \geq m$, \cite{Villard97Further, Villard97} gives a bound on the probability that the minimal generating polynomial of $\{XA^iY\}_i$ is $m$-faithful to the minimal generating polynomial of $\{A^iY\}_i$.  An exact formula and tighter bound for this probability are given in \cite{Brent:2003}.  These analyses are dependent on $b \geq m$ and the minimal generator of $\{A^iY\}_i$ having at most $m$ nontrivial invariant factors, thus eliminating the ``pathological'' case discussed in \cite{Coppersmith94}.  They do not speak directly to two sided analysis for situations in which $xI-A$ has more than $b$ nontrivial invariant factors.  Moreover, it is important for our purposes to have a probability bound that applies without regard to the matrix structure and that quantifies the increased confidence that can be achieved for computing $r$ invariants by selecting block size $b$ somewhat larger than $r$.

In this paper we develop an exact formula (Theorem \ref{thm:probrec}), $\Pbr{b}{r} (A)$, for the probability that a random projection is $r$-faithful, for given eigenstructure of $A$.  
We then construct the worst case  and derive a sharp lower bound (Theorem \ref{theorem:worst}), $P_{q,b,r}(n) = \min_{A \in \F^{n \times n}} \Pbr{b}{r}(A)$, on the probability that a random projection is $r$-faithful for arbitrary $n\times n$ matrix $A$.
Since the worst case occurs when there are exactly $r$ invariant factors, the bounds from \cite{Brent:2003} can be applied to estimate $P_{q,b,r}(n)$ (Theorem \ref{theorem:brent}).
Knowing $P_{q,b,r}(n)$ allows $b$ to be computed such that $P_{q,b,r}(n) \geq p$ for any desired probability $p$.  Using this we show that with a block size slightly larger than $r$ the projection
is $r$-faithful with high probability.  This makes precise previous observations and estimates
regarding block size.  The results in this paper are
an extension of our previous work in which we presented formulas for $\Pbr{b}{1}(A)$ and $P_{q,b,1}(n)$ \citep{HJS:2016}.  

The worst case bound can be improved by incorporating information about the invariant factors
of the minimal generating matrix $G$.  In the extreme case, where the sum of the degrees of 
the invariant factors of $G$ equals the matrix dimension, the invariant factors of $G$ are equal
to those of $xI-A$.  In less extreme cases the partial information obtained from $G$
can be used, post hoc, to improve the probability bounds for $G$ to be $r$-faithful (Theorem \ref{thm:post-hoc}).

The main results of this paper have been presented without the proofs as a poster at ISSAC 2016, with abstract \citep{HJS:CCA:2017}.
They are presented here with full development and proofs along 
with examples and new results on the post hoc analysis.

\section{Probability Analysis}

In this section we derive and prove an exact formula (Theorem \ref{thm:probrec}), $\Pbr{b}{r} (A)$, for the probability that a random projection is $r$-faithful, for given eigenstructure of $A$.
Similarly to the proofs in \cite{Villard97Further, Villard97} and \cite{Brent:2003}, our analysis reduces the probability calculation, first to primary components and then to a direct sum of companion matrices of irreducible polynomials.  

After introducing notation and some technical results, we show that the probability calculation can be split into independent consideration of the distinct primary components (Theorem \ref{thm:probprod}).
Then the probability for a primary component is reduced to that of a direct sum of companion matrices (Theorem \ref{thm:prob_lif}).  Finally, we show that the sequences generated by the individual companion matrices can mapped to vector outer products (Lemma \ref{lemma_outer}), which reduces the problem to a rank calculation (Theorems \ref{thm:prob_companion} and \ref{thm:probrec}).

The following notation will be used throughout the paper.  Starting with finite field $\F$, we will be working with the ring of polynomials $\P = \F[x]$ and it's modular images $\P_f = \F[x]/\langle f\rangle$, ($f \in \P$).
Let $A \in \F^{n \times n}$, $U, V \in \F^{n \times b}$. We are concerned with matrix sequences of the form $S = \{U^TA^iV\}_{i=0}^\infty$.  Define an action of $\P$ on such sequences by, for polynomial $f$ of degree $d$, $S \rightarrow fS$, where $(fS)_k = \sum_{i=0}^d f_iS_{k+i}$.  
When $fS = 0$ we say that $f$ generates $S$. 

Let $\S_f^{b \times b}$ denote the set of $b \times b$ matrix sequences generated by the scalar polynomial $f \in \P$. $\S_f^{b\times b}$ is a module over $\P$ with respect to the action given above.  
For short we use $\S_f = \S_f^{1\times 1}$ to denote the scalar sequences.
Let $\P^{b \times b}$ denote $b \times b$ matrix polynomials
and similarly for $\P_f^{b \times b}$.
We define a mapping, $\phi_f : \S_f^{b \times b} \rightarrow \P_f^{b \times b}$ that is both a vector space isomorphism over $\F$ and isomorphism of right $\P$ modules.  It also satisfies the property $\phi_f(SG) = \phi_f(S)G_f$, for $G_f$ being $G$ modulo $f$ (this is Corollary \ref{corr:map}).  
Note that $\P_f^{b\times b}$ is a $\P$-module and 
for $G,H \in \P^{b\times b}$, we have
$\phi_f(S)G = \phi_f(S)H$ whenever 
$H \equiv G (\mbox{mod }f)$.  This mapping is an extension of the mapping used by 
\cite{Wiedemann86} in his probabalistic analysis to $b \times b$ blocks with the details 
made explicit.

Now consider the action from the right of $\P^{b\times b}$ on matrix sequences.
We say $G \in \P^{b \times b}$ generates $S$, if $\det(G) \neq 0$ and $(SG)_k = \sum_{i=0}^d S_{k+i} G_i = 0$ for all $k\geq 0$, where $d = \deg(G)$.  
$G$ is minimal if its columns 
form a basis for the annihilator 
of $S$, or equivalently $\deg(\det(G))$ is minimal \citep{Yuhasz13}.
It follows that two minimal generators for $S$ are unimodularly equivalent and thus have the same Smith normal form.
By Theorem 2.12 in \citet{KaVi:2004}, if $G$ is minimal then the $i$-th invariant factor of $G$ divides the $i$-th invariant factor of $xI-A$.  This section analyzes the probability that for random $U$ and $V$ that $G$ is $r$-faithful to $A$.

\begin{definition}
Let $A \in \F^{n \times n}$, let $U,V \in \F^{n \times b}$ be uniformly random, 
let $S = \{U^TA^iV\}_{i=0}^\infty$, and let $G \in \P^{b \times b}$ minimally generate projection $S$.  
Define $\Pbr{b}{r}(A)$ to be the probability that $G$ is $r$-faithful to $A$.
\end{definition}

Let $A = XJY$ where $X,Y \in \F^{n \times n}$ are nonsingular and $J \in \F^{n \times n}$ is a generalized Jordan normal form.  Because $U,V$ are chosen uniformly at random, $UX$ and $YV$ are also uniformly random and $\Pbr{b}{r}(A) = \Pbr{b}{r}(J)$.
Therefore, we can restrict our analysis to matrices in Jordan form: $A = \dirsum_{i,j} J_{f_i^{e_{i,j}}}$ where $f_i \in \P$ are distinct monic irreducible polynomials and $J_{f^e}$ denotes the generalized Jordan block associated with $f^e$.
Let $C_f$ denote the companion matrix of $f$, 
 and note that $C_f = J_{f^1}$.
$$ C_f = \begin{bmatrix} 
0 & 0 & \ldots & -f_0\\
1 & 0 & \ldots & -f_1\\
0 & \ddots & \ddots & \vdots\\
0 & 0 & 1 & -f_{d-1}\\
\end{bmatrix},~~~ 
J_{f^e} = 
\begin{bmatrix} 
C_f & 0 & \ldots & 0\\
I & C_f & \ldots & 0\\
0 & \ddots & \ddots & 0\\
0 & 0 & I & C_f\\
\end{bmatrix},$$ 

\begin{definition}
Let $f \in \P$ be a scalar polynomial of degree $d$.  Define $\rho : \P_f \rightarrow \F^{d \times d}, \rho(a)b = ab \mod f,$ 
to be the regular representation of the polynomial algebra
$\P_f$.
Here we are equating polynomials modulo $f$ with their column vectors of coefficients and, explicitly, $\rho(a)$ is the Krylov matrix $K_f(a)$ generated by the companion matrix $C_f$ and $a$. 
$$\rho(a) = K_f(a) = \sum_{i=0}^{d-1} a_iC_f^i = \begin{bmatrix}a & C_fa & \ldots & C_f^{d-1}a\end{bmatrix}.$$ 
Define 
$\omega_f : \S_f \rightarrow \P_f$ by $\omega_f(S) = \sum_{i=0}^{d-1}S_ix^i$, 
and then   
define
$\phi_f : \S_f \rightarrow \P_f$ by $\phi_f(S) = P\omega_f(S),$ where $P$ is a nonsingular matrix satisfying $P\rho^T(a) = \rho(a)P$ for all $a \in \P_f$.  The existence of such $P$ is shown 
in \citep{taussky1959}.
Extend $\omega_f$ and $\phi_f$ componentwise to $\S_f^{b\times b} \rightarrow \P_f^{b\times b}$.
\end{definition}


\begin{lemma}
\label{lem:mapping}
Let $f \in \P$ be a polynomial of degree $d$, $S \in \S_f$, and $g \in \P.$  Then $\phi_f(Sg) = \rho(g)\phi_f(S) = g\phi_f(S).$
\end{lemma}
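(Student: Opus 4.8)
The plan is to verify the two claimed equalities $\phi_f(Sg) = \rho(g)\phi_f(S)$ and $\rho(g)\phi_f(S) = g\phi_f(S)$ separately, reducing everything to the scalar case $S \in \S_f$ (the block case follows componentwise since both sides are defined componentwise). The second equality is essentially a matter of unwinding notation: under the identification of $\P_f$ with coefficient vectors in $\F^d$, multiplication by the polynomial $g$ modulo $f$ is by definition the linear map $\rho(g)$, so $\rho(g)\phi_f(S)$ and $g\phi_f(S)$ denote the same element of $\P_f$. I would state this in one line, citing the definition of $\rho$ as the regular representation. The substance is in the first equality.

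For the first equality, I would work through $\omega_f$ first and then conjugate by $P$. Write $\phi_f(S) = P\,\omega_f(S)$ with $\omega_f(S) = \sum_{i=0}^{d-1} S_i x^i$. The key observation is that the action of $g$ on the sequence $S$, namely $(Sg)_k = \sum_j S_{k+j} g_j$, corresponds on the $\omega_f$ side to multiplication by $g$ composed with the shift; since $f$ generates $S$, the shift operator on $\S_f$ acts (via $\omega_f$) as multiplication by $x$ modulo $f$, i.e.\ as $C_f$. So one shows $\omega_f(Sg) = \rho^T(g)\,\omega_f(S)$ — note the transpose, because the "multiply-then-sum-over-shifts" convolution on the sequence side transposes relative to ordinary polynomial multiplication on coefficient vectors. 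This is the computational heart: expand $(Sg)_k$ for $k = 0,\ldots,d-1$, use the recurrence $fS=0$ to fold back the terms $S_{d}, S_{d+1}, \ldots$ into $\F$-combinations of $S_0,\ldots,S_{d-1}$ with coefficients given by powers of $C_f$, and recognize the resulting linear map as $\rho^T(g) = K_f(g)^T$. Then apply $P$: $\phi_f(Sg) = P\,\omega_f(Sg) = P\rho^T(g)\,\omega_f(S) = \rho(g)P\,\omega_f(S) = \rho(g)\phi_f(S)$, where the middle step uses exactly the defining property $P\rho^T(a) = \rho(a)P$ of $P$ (from \citep{taussky1959}).

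The main obstacle is bookkeeping the transpose and the fold-back correctly — getting $\omega_f(Sg) = \rho^T(g)\omega_f(S)$ rather than $\rho(g)\omega_f(S)$, and confirming that the reductions modulo $f$ (which involve the companion matrix $C_f$, equivalently multiplication by $x$ in $\P_f$) assemble into precisely the Krylov matrix $K_f(g) = \sum_i g_i C_f^i$ transposed. I would handle this cleanly by first treating $g = x^m$ for a single power (where the claim $\omega_f(S\cdot x^m) = (C_f^T)^m \omega_f(S)$ amounts to: shifting $S$ by $m$ and truncating to the first $d$ entries equals applying the shift-mod-$f$ map $m$ times, which is transpose-of-$C_f$ acting on the coefficient vector), then extending to general $g$ by $\F$-linearity in the coefficients $g_i$. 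With the transpose identity in hand, the passage to $\phi_f$ via $P$ is immediate, and the block case is just componentwise application. This lemma is the bridge that lets later results (Corollary \ref{corr:map}, and ultimately the reduction to rank computations) move between sequences and polynomial matrices.
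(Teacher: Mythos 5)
Your proposal is correct and follows essentially the same route as the paper: both establish the key identity $\omega_f(Sg) = \rho^T(g)\,\omega_f(S)$ (the paper via the Hankel-matrix-times-vector expansion, you via the monomial case $g=x^m$ plus linearity, which is the same computation row by row) and then conjugate by $P$ using $P\rho^T(g)=\rho(g)P$. No gaps.
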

\begin{proof}
Because $S$ is generated by $f$, and $\deg(f) = d$, the sequences $S$ and $Sg$ are fully defined by their first $d$ elements.  We can write $\omega_f(Sg)$ as a Hankel matrix times vector product and observe that
\begin{eqnarray*}
\omega_f(Sg) &=& \left[\begin{matrix}
S_0 & S_1 & \cdots & S_{d-1} \\
S_1 & S_2 &        & S_d \\
\vdots &  & \ddots & \vdots \\
S_{d-1} & S_{d} & \cdots & S_{2d-2}
\end{matrix}\right]\left[\begin{matrix}
g_0 \\ g_1 \\ \vdots \\ g_{d-1}
\end{matrix}\right]
= \left[\begin{matrix}
\omega_f^T(S) \\
\omega_f^T(S)C_f \\
\vdots \\
\omega_f^T(S)C_f^{d-1}
\end{matrix}\right]\left[\begin{matrix}
g_0 \\ g_1 \\ \vdots \\ g_{d-1}
\end{matrix}\right] \\
&=& \left(\omega_f^T(S)\left[\begin{matrix}g & C_fg & \cdots & C_f^{d-1}g\end{matrix}\right]\right)^T \\
& = &
(\omega_f^T(S)\rho(g))^T = \rho^T(g)\omega_f(S), 
\end{eqnarray*}
Therefore,
$\phi_f(Sg) = P\omega_f(Sg)$ (by definition), which by the previous observation equals 
$P\rho^T(g)\omega_f(S)$.
Using the definition of $P$ and $\rho$, we have 
$P\rho^T(g)\omega_f(S) = 
\rho(g)P\omega_f(S) = \rho(g)\phi_f(S) = g\phi_f(S)$.

\end{proof}

\begin{corollary}
\label{corr:map}
Let
$S \in \S_f^{b \times b}, G \in \P^{b \times b}.$  Then, $\phi_f(SG) = \phi_f(S)G$.
\end{corollary}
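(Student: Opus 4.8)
The plan is to reduce the matrix identity entrywise to the scalar case already established in Lemma~\ref{lem:mapping}, using only the $\F$-linearity of $\phi_f$ and the commutativity of $\P_f$. First I would fix notation for entries: for $S\in\S_f^{b\times b}$ write $S^{(p,q)}$ for the scalar sequence $\{(S_k)_{p,q}\}_{k}$, and for $G\in\P^{b\times b}$ write $G^{(l,q)}=\sum_i (G_i)_{l,q}\,x^i\in\P$. The generating relation $fS=0$ is imposed entrywise, so each $S^{(p,q)}$ lies in $\S_f$ and Lemma~\ref{lem:mapping} applies to it. Expanding the right action of $G$ entrywise gives, for every $k$, $((SG)_k)_{p,q}=\sum_{l=1}^{b}\sum_{i}(S_{k+i})_{p,l}(G_i)_{l,q}=\sum_{l=1}^{b}\big(S^{(p,l)}G^{(l,q)}\big)_k$; that is, the scalar sequence $(SG)^{(p,q)}$ equals $\sum_{l=1}^{b} S^{(p,l)}G^{(l,q)}$.

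Next I would apply $\phi_f$ to this identity of scalar sequences. Since $\phi_f=P\,\omega_f$ is $\F$-linear and the sum over $l$ is finite, $\phi_f\big((SG)^{(p,q)}\big)=\sum_{l=1}^{b}\phi_f\big(S^{(p,l)}G^{(l,q)}\big)$, and by Lemma~\ref{lem:mapping} each summand equals $\rho(G^{(l,q)})\,\phi_f(S^{(p,l)})=G^{(l,q)}\,\phi_f(S^{(p,l)})$, the product taken in $\P_f$. (If $\deg G^{(l,q)}\ge d$ one first reduces modulo the monic $f$, which alters neither $S^{(p,l)}G^{(l,q)}$ nor the value of $\rho$; this is precisely the already-noted fact that $\phi_f(S)G=\phi_f(S)H$ whenever $H\equiv G\pmod f$.) On the other side, $\phi_f(S)$ is the $b\times b$ matrix over $\P_f$ with $(p,l)$ entry $\phi_f(S^{(p,l)})$, and the $(p,q)$ entry of the product $\phi_f(S)G$ in $\P_f^{b\times b}$ is $\sum_{l=1}^{b}\phi_f(S^{(p,l)})\,G^{(l,q)}$.

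These two expressions agree term by term once we invoke commutativity of $\P_f=\F[x]/\langle f\rangle$: $G^{(l,q)}\,\phi_f(S^{(p,l)})=\phi_f(S^{(p,l)})\,G^{(l,q)}$, or equivalently, in the coefficient-vector picture, $\rho(g)\vec{h}=\rho(h)\vec{g}$. Hence $\phi_f(SG)_{p,q}=(\phi_f(S)G)_{p,q}$ for all $p,q$, giving $\phi_f(SG)=\phi_f(S)G$.

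I expect the only real friction to be bookkeeping: keeping straight the two roles $G$ plays — as a polynomial (reduced mod $f$) acting via $\rho$ on a coefficient vector, which is the form in which Lemma~\ref{lem:mapping} delivers the conclusion, versus as an ordinary matrix factor multiplying $\phi_f(S)$ on the right — and verifying that commutativity of $\P_f$ makes the two coincide. No new idea beyond Lemma~\ref{lem:mapping}, $\F$-linearity of $\phi_f$, and commutativity of $\P_f$ is needed.
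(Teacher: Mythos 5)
Your proof is correct and follows essentially the same route as the paper's: reduce to entries via $(SG)_{pq}=\sum_l S_{pl}G_{lq}$, use $\F$-linearity of $\phi_f$, apply Lemma~\ref{lem:mapping} to each scalar term, and invoke commutativity of $\P_f$ to identify the result with the $(p,q)$ entry of $\phi_f(S)G$. You are merely more explicit than the paper about the reduction of $G_{lq}$ modulo $f$ and the commutativity step, which is fine.
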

\begin{proof}
\begin{eqnarray*}
\phi_f(SG)_{ij} &=& \phi_f((SG)_{ij})
 = \phi_f\left(\sum_{k=0}^d S_{ik}G_{kj}\right)
 = \sum_{k=0}^d \phi_f(S_{ik}G_{kj}) \\
&=& \sum_{k=0}^d \phi_f(S_{ik})G_{kj}
 = (\phi_f(S)G)_{ij},
\end{eqnarray*}
where $d = \deg(G).$
\end{proof}

In view of Corollary \ref{corr:map},
$G$ generates $S$ if and only if $\phi_f(S)G = 0$ and $\det(G) \neq 0$.  
Motivated by this we will also speak of generating matrices over $\P_f$: $G$ generates $A \in \P_f^{b \times b}$ if $AG = 0$ and $\det(G) \neq 0$, and $G$ minimally generates $A$ if $\deg(\det(G))$ is minimal.  Lemma \ref{lem:min_gen_bound} and Theorem \ref{thm_pc_if} relate the Smith normal 
form (snf) of a matrix over $\P_f$ and the Smith normal form of its minimal generating matrix in the case that $f$ is an irreducible power.

\begin{lemma}
\label{lem:min_gen_bound}
Let $f \in \P$ be an irreducible polynomial of degree $d$ and let $e$ be a positive integer.  Let $A \in \P_{f^e}^{b \times b}$, and $r_i$ be the number of non-zero invariant factors of $Af^i$.  If $G \in \P^{b \times b}$ generates $A$, then $\deg(\det(G)) \geq \sum_{i=0}^{e-1} r_i d$.
\end{lemma}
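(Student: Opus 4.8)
The plan is to reduce to a counting/dimension argument over the field $\F$ after tensoring with $\P_{f^e}$, exploiting the fact that $\P_{f^e}$ is a local ring with residue field $\F[x]/\langle f\rangle =: \K$, a finite extension of $\F$ of degree $d$. The key observation is that the ``mass'' that a generator $G$ must annihilate is measured by $\deg(\det(G))$, while the ``mass'' present in $A$ is measured by $\sum_{i=0}^{e-1} r_i d$, where $r_i$ records the rank of $Af^i$ viewed modulo $f$. I would first set up the filtration of $\P_{f^e}$ by powers of the maximal ideal $\langle f\rangle$: we have $\P_{f^e} \supset f\P_{f^e} \supset \cdots \supset f^{e-1}\P_{f^e} \supset 0$, and each successive quotient $f^i\P_{f^e}/f^{i+1}\P_{f^e}$ is a one-dimensional $\K$-vector space, hence a $d$-dimensional $\F$-vector space. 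Correspondingly $A \in \P_{f^e}^{b\times b}$ induces maps $A^{(i)}$ on these graded pieces, and $r_i$ is exactly the $\K$-rank of the induced map on the $i$-th graded piece (this is what ``number of non-zero invariant factors of $Af^i$'' records — one should check that $r_i$ as defined equals the rank of $A$ acting on $f^i\P_{f^e}^b / f^{i+1}\P_{f^e}^b$, using the Smith normal form of $A$ over the principal ideal ring $\P_{f^e}$).

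Next I would pass from $A$ to the sequence module it generates. Since $G$ generates $A$ means $AG = 0$ with $\det G \neq 0$, the columns of $G$ lie in the right annihilator of $A$ inside $\P^{b\times b}$ (equivalently, after reduction, inside $\P_{f^e}^{b\times b}$ one gets the annihilator as a $\P_{f^e}$-submodule). The right annihilator $N = \{w \in \P_{f^e}^b : Aw = 0\}$ is a $\P_{f^e}$-submodule of $\P_{f^e}^b$, and by the structure theorem over the local PIR $\P_{f^e}$ its $\F$-dimension is $\dim_\F \P_{f^e}^b - \dim_\F(\operatorname{im} A) = be\cdot d - d\sum_{i=0}^{e-1} r_i$ — here I use that the image of $A$ on the filtered module has $\F$-dimension $d \sum_{i=0}^{e-1} r_i$, summing the graded contributions. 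The point is that $\det(G) \neq 0$ forces the columns of $G$, reduced mod $f^e$, to generate a submodule of $N$ that is ``full'' in an appropriate sense: specifically $\P_{f^e}^b / (\text{column span of } G \bmod f^e)$ must be finite of $\F$-dimension at most $d\sum r_i$... actually the cleaner route is to observe $\deg\det(G) = \dim_\F(\P^b / G\P^b)$ when this quotient is torsion (which it is, since $\det G \neq 0$), and $G\P^b \subseteq \{w \in \P^b : Aw \equiv 0 \bmod f^e\}$; the colength of the latter submodule of $\P^b$ is precisely $\dim_\F(\operatorname{im} A) = d\sum_{i=0}^{e-1} r_i$, and colength is monotone under inclusion of full-rank submodules, giving $\deg\det(G) = \dim_\F(\P^b/G\P^b) \geq \dim_\F(\P^b / \{w : Aw \equiv 0\}) = d\sum_{i=0}^{e-1} r_i$.

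The main obstacle, and the step I would spend the most care on, is the bookkeeping identity $\dim_\F(\operatorname{im} A) = d\sum_{i=0}^{e-1} r_i$ for $A \in \P_{f^e}^{b\times b}$ acting on $\P_{f^e}^b$. One has to be careful that $r_i$ — defined as the number of nonzero invariant factors of $Af^i$ over $\P_{f^e}$ (or over $\P$) — really equals the $\K$-rank of the $i$-th associated graded map, and that these ranks are non-increasing in $i$ so that the graded decomposition of $\operatorname{im}A$ has the stated dimension. I would prove this by putting $A$ into Smith normal form $A = P\,\diag(f^{a_1},\dots,f^{a_k},0,\dots,0)\,Q$ over $\P_{f^e}$ with $0 \le a_1 \le \cdots \le a_k < e$; then $Af^i \sim \diag(f^{a_1+i},\dots)$ with the convention $f^j = 0$ for $j \ge e$, so $r_i = \#\{\ell : a_\ell + i < e\} = \#\{\ell : a_\ell \le e-1-i\}$, and $\sum_{i=0}^{e-1} r_i = \sum_\ell (e - a_\ell)$, while $\dim_\F(\operatorname{im}A) = \sum_\ell \dim_\F(f^{a_\ell}\P_{f^e}) = \sum_\ell (e-a_\ell) d$, matching. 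The remaining inequality (colength monotonicity) is then immediate from the inclusion $G\P^b \subseteq \ker(\P^b \to \P_{f^e}^b/\operatorname{im}A)$ together with both sides having full rank $b$.
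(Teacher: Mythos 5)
Your proof is correct, but it takes a genuinely different route from the paper's. The paper argues directly on the Smith normal form of $G$: it factors $\snf(G)$ as a product of diagonal matrices $G_j = \diag(1,\ldots,1,f,\ldots,f)$ and runs an induction tracking how many columns of $Af^iX$ can be zero, which yields the componentwise inequalities $g_i \geq r_i$ between the $f$-multiplicities in $\snf(G)$ and the ranks $r_i$, and then sums. Your argument is instead a colength computation: $\deg\det(G) = \dim_\F(\P^b/G\P^b)$ since $\det(G)\neq 0$; the inclusion $G\P^b \subseteq N' := \{w \in \P^b : A\overline{w} = 0\}$ coming from $AG=0$; the isomorphism $\P^b/N' \cong \operatorname{im}(A)$ together with the bookkeeping $\dim_\F(\operatorname{im}A) = d\sum_{i=0}^{e-1} r_i$ via the Smith form of $A$ over the local principal ideal ring $\P_{f^e}$; and the surjection $\P^b/G\P^b \twoheadrightarrow \P^b/N'$. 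This is cleaner and sidesteps the paper's somewhat delicate induction on zero columns; what it gives up is the componentwise information $g_i \geq r_i$, but the lemma asserts only the total degree bound, and that is all Theorem \ref{thm_pc_if} uses (minimality there is certified by exhibiting an explicit generator attaining the bound). Two minor points of hygiene: your parenthetical suggestion that the $r_i$ could equally be computed ``over $\P$'' should be dropped, since a lift of $Af^i$ to $\P$ can have strictly larger rank (e.g.\ $A = (f^{e-1})$ with $i=1$ gives $r_1=0$ over $\P_{f^e}$ but a nonzero lift $f^e$ over $\P$) --- your actual computation, with the convention $f^j = 0$ for $j \geq e$, correctly works over $\P_{f^e}$; and the first, abandoned version of the argument involving the annihilator $N$ should simply be deleted in favor of the ``cleaner route,'' which is self-contained.
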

\begin{proof}
Let $G = X \snf(G) Y$, where $X,Y \in \P^{b \times b}$ are unimodular. 
Let $g_j$ denote the number of invariant factors of $G$ divisible by $f^j$ and let 
$G_j$ denote\\ $diag(1,\ldots,1,f,\ldots,f)$, in which $f$ is repeated $g_j$ times.  
Since $Af^e=0$, we have that $\snf(G) = \prod_{j=0}^{e-1} G_j$, with $g_j \geq g_{j+1}$ being the count of invariant factors equal to $f^j$. 
Moreover, since $G$ generates $A$, $AX \snf(G) Y = AX \prod_{j=0}^{e-1} G_jY =0$, and 
since $Y$ is unimodular, $ AX \prod_{j=0}^{e-1} G_j =0$.
Using this as the base case, it follows by induction that 
$Af^i X \prod_{j=i}^{e-1} G_j = 0$, for $0 \le i < e$.
Since $Af^i X \prod_{j=i}^{e-1} G_j = 0$ and $G_i$ has $b-g_i$ ones 
followed by $g_i$ copies of $f$ along the diagonal and $g_i \geq g_j$ for $j=i+1,\ldots,e-1$, it follows that
$Af^i X$ is a matrix whose first $b-g_j$ columns are zero.
\[
Af^i X =
\begin{pmatrix} 
  \overbrace{\begin{matrix}0\end{matrix}}^{b-g_i} & \vline &
  \overbrace{\begin{matrix}*\end{matrix}}^{g_i}
\end{pmatrix}.
\]
For a matrix in this form, multiplication from the right by $G_i$ has the same effect as multiplication by $fI$ 
so that $A(f^iXG_i)\prod_{j=i+1}^{e-1}G_j = A(f^{i+1}X) \prod_{j=i+1}^{e-1} G_j.$
Finally, since $Af^i$ has $r_i$ non-zero invariant factors, the maximum number of
columns of $Af^iX$ which are zero is equal to $b-r_i$ and we conclude that
$g_i \geq r_i$ and  
$\deg(\det(G)) = \sum_{i=0}^{e-1} g_i d \geq \sum_{i=0}^{e-1} r_i d$.
\end{proof}

\begin{theorem}
\label{thm_pc_if}
Let $f \in \P$ be an irreducible polynomial of degree $d$, $e$ be a positive integer,  
$A \in \P_{f^e}^{b \times b}$, 
and let $G \in \P^{b \times b}$ minimally generate $A$.  Let
\[
\snf(A) = \diag(\underbrace{f^0,\ldots,f^0}_{m_0},\ldots,\underbrace{f^{e-1},\ldots,f^{e-1}}_{m_{e-1}},\underbrace{0,\ldots,0}_{m_e}).
\]
Then
\[
T\snf(G)T = \diag(
\underbrace{f^e,\ldots,f^e}_{m_0},
\ldots,
\underbrace{f,\ldots,f}_{m_{e-1}},
\underbrace{1,\ldots,1}_{m_e}
),
\]
where $T$ is the ones-on-antidiagonal matrix, i.e., we've reversed order of the invariants for convenience.  Moreover, $m_i = r_{i-1} - r_i$, where $r_i$ is the number of non-zero
invariant factors of $Af^i$.
\end{theorem}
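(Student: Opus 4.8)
The plan is to pin down $\snf(G)$ by identifying the $\P$-module $\P^b/G\P^b$ with an explicit torsion module read off from $\snf(A)$, and then to deduce the relation between the $m_i$ and the $r_i$ by inverting a triangular linear system. For $v\in\P^b$ write $\bar v$ for its reduction modulo $f^e$, and set $M=\{v\in\P^b:A\bar v=0\}$. Since $v\in f^e\P^b$ forces $\bar v=0$, we have $f^e\P^b\subseteq M\subseteq\P^b$, so $M$ is a rank-$b$ submodule of a free $\P$-module, hence itself free of rank $b$; and $G$ minimally generates $A$ if and only if its columns form a $\P$-basis of $M$ (any generator's columns lie in $M$ and span a rank-$b$ sublattice $G\P^b\subseteq M\subseteq\P^b$, and $\deg\det G$ measures the size of $\P^b/G\P^b$, which is minimized exactly when $G\P^b=M$). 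In that case $\P^b/G\P^b=\P^b/M$, so the invariant factors of $\snf(G)$ are precisely those of the $\P$-module $\P^b/M$.

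\emph{Identifying $\P^b/M$.} The map $\psi:\P^b\to\P_{f^e}^b,\ \psi(v)=A\bar v$, is $\P$-linear with kernel $M$, so $\P^b/M\cong\operatorname{im}\psi$, which is the column module of $A$ over $\P_{f^e}$. Writing $A=U\,\snf(A)\,V$ with $U,V\in\mathrm{GL}_b(\P_{f^e})$, this column module is $\P_{f^e}$-isomorphic to the column module of $\snf(A)$, namely $\bigoplus_i f^{j_i}\P_{f^e}$ where $j_i$ ranges over the exponents of $\snf(A)$ (taking $j_i=e$ for the zero entries). Since $f^j\P_{f^e}\cong\P/\langle f^{e-j}\rangle$ as $\P$-modules, this gives
\[
\P^b/M\ \cong\ \bigl(\P/\langle f^{e}\rangle\bigr)^{m_0}\oplus\cdots\oplus\bigl(\P/\langle f\rangle\bigr)^{m_{e-1}}\oplus\bigl(\P/\langle 1\rangle\bigr)^{m_e}.
\]
By uniqueness of the invariant-factor decomposition over the PID $\P$, the invariant factors of $G$ are therefore $f^e$ with multiplicity $m_0$, $\ldots$, $f$ with multiplicity $m_{e-1}$, and $1$ with multiplicity $m_e$; listing them in decreasing divisibility order, i.e.\ conjugating $\snf(G)$ by the order-reversing $T$, yields the stated form of $T\,\snf(G)\,T$.

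\emph{The $r_i$ relation, and the main obstacle.} Over $\P_{f^e}$, $\snf(Af^i)$ is obtained from $\snf(A)$ by sending each diagonal entry $f^j$ to $f^{i+j}$ if $i+j<e$ and to $0$ otherwise, so the number of non-zero invariant factors of $Af^i$ is $r_i=\sum_{j=0}^{e-1-i}m_j$ for $0\le i<e$, and $r_i=0$ for $i\ge e$. This is a triangular system in the $m_j$; inverting it expresses each $m_i$ as a difference of two consecutive $r$'s, which is the relation $m_i=r_{i-1}-r_i$ once the indices are matched against the reversal $T$ (with the boundary conventions $r_{-1}=b$ and $r_j=0$ for $j\ge e$). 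The substantive step is the module identification in the second paragraph; everything else is routine structure theory over the PID $\P$ and the local principal ideal ring $\P_{f^e}$. The point requiring genuine care is to keep the two notions of unimodular equivalence distinct — over $\P_{f^e}$ when handling $\snf(A)$ and over $\P$ when handling $\snf(G)$, a $\P_{f^e}$-unimodular matrix generally not lifting to a $\P$-unimodular one — and to track the reversal $T$ so that the index bookkeeping in $m_i=r_{i-1}-r_i$ comes out correctly.
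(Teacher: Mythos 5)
Your proof is correct, but it takes a genuinely different route from the paper's. The paper exhibits an explicit generator $H = Q^{-1}\diag(f^e,\ldots,f,1,\ldots,1)$ built from the Smith decomposition of $A$, computes $\deg(\det(H)) = \sum_{i=0}^{e-1} r_i d$, and certifies minimality via the degree lower bound of Lemma \ref{lem:min_gen_bound}; the Smith form of $G$ then follows because minimal generators are unimodularly equivalent. You bypass Lemma \ref{lem:min_gen_bound} entirely: you characterize minimal generators as $\P$-bases of the annihilator lattice $M=\ker\psi$, identify $\P^b/G\P^b=\P^b/M\cong A\P_{f^e}^b\cong\bigoplus_j\bigl(\P/\langle f^{e-j}\rangle\bigr)^{m_j}$ by the first isomorphism theorem, and read off $\snf(G)$ from uniqueness of the invariant-factor decomposition. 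Your version is more self-contained and conceptually cleaner, and it correctly separates unimodular equivalence over $\P_{f^e}$ from that over $\P$ (the paper glosses over this when it writes $A=P\snf(A)Q$ with $P,Q\in\P^{b\times b}$ unimodular); the cost is reliance on the standard facts that $\dim_\F(\P^b/G\P^b)=\deg(\det(G))$ and that submodules of free $\P$-modules are free, where the paper instead reuses its own already-proved lemma. One shared wrinkle: from $r_i=\sum_{j=0}^{e-i-1}m_j$ the literal consequence is $m_{e-i}=r_{i-1}-r_i$ rather than $m_i=r_{i-1}-r_i$; this off-by-reversal is present in the paper's own proof as well, and your caveat that the indices must be \emph{matched against the reversal $T$} is exactly the right way to resolve it.
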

\begin{proof}
Observe that $r_i = \sum_{j=0}^{e-i-1} m_j$ for $0 \leq i < e$ 
and consequently $m_i = r_{i-1} - r_i$.  
Let $A = P\snf(A)Q$, where $P,Q \in \P^{b \times b}$ are unimodular, and let
\[
H = Q^{-1} \diag(\underbrace{f^e,\ldots,f^e}_{m_0},\ldots,\underbrace{f,\ldots,f}_{m_{e-1}},\underbrace{1,\ldots,1}_{m_e}).  
\]
By definition, $H$ generates $A$, and $\deg(\det(H)) = \sum_{i=0}^{e-1} r_i d$.  By Lemma \ref{lem:min_gen_bound}, $H$ is minimal because no generators with lower determinantal degree exist.
Since minimal generators are unimodularly equivalent, $G$ has the same Smith form.
\end{proof}

Let $S \in \S_{fg}^{b \times b}$ where $\gcd(f,g) = 1$, $G_1 \in \P_f^{b \times b}$ minimally generate $S$ modulo $f$, and $G_2 \in \P_g^{b \times b}$ minimally generate $S$ modulo $g$.  By the Chinese remainder theorem and Newman's Theorem II.14 \citep{Newman72}, the Smith normal form of the minimal generator, $G \in \P^{b \times b}$, of $S$ is $\snf(G) = \snf(G_1)\snf(G_2)$.  This observation leads to the following theorem, which reduces the probability calculation to primary components.

\begin{theorem}
\label{thm:probprod}
Let $A$ and $B$ be matrices with relatively prime minimal polynomials.  Then $$\Pbr{b}{r}\left(A \oplus B\right) = \Pbr{b}{r}(A) \Pbr{b}{r}(B).$$
\end{theorem}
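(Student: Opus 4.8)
The plan is to reduce the statement to the already-established behavior of Smith normal forms under direct sums of coprime primary components, and then to the independence of the random projections restricted to each component. Write $C = A \oplus B$ acting on $\F^{n_A} \oplus \F^{n_B}$, and let $f_A, f_B$ be the minimal polynomials of $A$ and $B$ respectively, so $\gcd(f_A, f_B) = 1$ and the minimal polynomial of $C$ is $f_A f_B$. Take $U,V \in \F^{n \times b}$ uniformly random, write them in block form $U = \minibox{$U_A$\\$U_B$}$, $V = \minibox{$V_A$\\$V_B$}$ conformally with the decomposition, and set $S = \{U^T C^i V\}_i$, $S^A = \{U_A^T A^i V_A\}_i$, $S^B = \{U_B^T B^i V_B\}_i$. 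Because $C^i = A^i \oplus B^i$, we have $S_i = S^A_i + S^B_i$ for all $i$, and $S$ is generated by $f_A f_B$ while $S^A$ is generated by $f_A$ and $S^B$ by $f_B$.

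Next I would identify $S \bmod f_A$ with $S^A$ and $S \bmod f_B$ with $S^B$. Since $f_B \equiv (f_B \bmod f_A)$ is a unit modulo $f_A$ and $S^B$ is annihilated by $f_B$, the sequence $S^B$ is also annihilated by the unit, hence $S^B \bmod f_A$ contributes nothing new: more precisely, working in $\S_{f_A}^{b\times b}$ via $\phi_{f_A}$, one checks $\phi_{f_A}(S) $ and $\phi_{f_A}(S^A)$ determine the same annihilator in $\P^{b\times b}$, so a matrix $G_1 \in \P^{b\times b}$ minimally generates $S$ modulo $f_A$ iff it minimally generates $S^A$; symmetrically for $f_B$ and $S^B$. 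Then the Chinese Remainder Theorem together with Newman's Theorem II.14, exactly as quoted in the paragraph preceding the theorem, gives $\snf(G) = \snf(G_1)\snf(G_2)$ where $G$ minimally generates $S$. The invariant factors of $xI - C$ likewise multiply: $\snf(xI-C) = \snf(xI-A)\snf(xI-B)$, and since $\gcd(f_A,f_B)=1$ the primary components of the two factors are disjoint. Consequently the $r$ largest invariant factors of $G$ match those of $xI-C$ if and only if, primary component by primary component, $G_1$ is faithful to the $A$-part and $G_2$ is faithful to the $B$-part — and because each invariant factor of $G_1$ divides the corresponding one of $xI-A$ (Theorem 2.12 of \cite{KaVi:2004}) and likewise for $G_2$, the event ``$G$ is $r$-faithful to $C$'' is precisely the intersection of ``$G_1$ is $r$-faithful to $A$'' and ``$G_2$ is $r$-faithful to $B$''.

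Finally I would invoke independence of the two block pairs: $U_A, V_A$ are uniform in $\F^{n_A \times b}$ and $U_B, V_B$ are uniform in $\F^{n_B \times b}$, and the four are mutually independent because $(U,V)$ is uniform on the product. Since the event about $G_1$ depends only on $(U_A, V_A)$ and the event about $G_2$ only on $(U_B, V_B)$, their probabilities multiply, giving $\Pbr{b}{r}(A\oplus B) = \Pbr{b}{r}(A)\,\Pbr{b}{r}(B)$, as $\Pbr{b}{r}(A)$ is by definition the probability of ``$G_1$ is $r$-faithful to $A$'' and similarly for $B$.

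The main obstacle I anticipate is the bookkeeping in the middle step: carefully justifying that minimally generating $S$ modulo $f_A$ is the same as minimally generating $S^A$ (i.e.\ that the "$B$-noise" is invisible modulo $f_A$), and that matching the $r$ largest invariant factors globally is equivalent to matching them separately on each coprime part. The divisibility direction (Theorem 2.12 of \cite{KaVi:2004}) is what makes "$r$-faithful" a purely combinatorial condition on which invariant-factor degrees are attained, and the coprimality of $f_A, f_B$ keeps the two lists of primary factors from interfering; getting this equivalence stated cleanly is the crux, after which the CRT/Newman factorization and the independence argument are routine.
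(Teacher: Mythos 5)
Your proposal is correct and follows essentially the same route as the paper: decompose $S$ into the independent component sequences, use the CRT/Newman factorization $\snf(G)=\snf(G_1)\snf(G_2)$ to show $r$-faithfulness of $G$ is the conjunction of $r$-faithfulness of $G_1$ and $G_2$, and multiply the probabilities by independence of the block projections. You in fact supply details the paper leaves implicit — notably that the divisibility $g_i \mid a_i$ is what turns ``the products of corresponding invariants agree'' into ``each factor agrees,'' which is needed for the if-and-only-if.
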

\begin{proof}
Let $f$ and $g$ be the minimal polynomials of $A$ and $B$ respectively, and let
$S_f = \{U_1^TA^iV_1\}_{i=0}^\infty$ and $S_g = \{U_2^TB^iV_2\}_{i=0}^\infty$ with 
minimal generators $G_1$ and $G_2$ respectively.  Then $fg$ is the minimal polynomial
of $A \oplus B$. Let 
\[
S_{fg} = \left\{(U_1^T U_2^T) (A\oplus B)^i\begin{pmatrix}V_1\\ V_2\end{pmatrix} \right\}_{i=0}^\infty
\]
with minimal generator $G$.
Since $\snf(G) = \snf(G_1) \snf(G_2)$, $G$ is $r$-faithful if and only if 
$G_1$ and $G_2$ are $r$-faithful.
\end{proof}

In view of theorem \ref{thm:probprod} we may focus on each primary component.  Henceforward the matrix will be of the form $\dirsum_{i=1}^m J_i$, where for given irreducible polynomial $f \in \P$ and nonincreasing exponent sequence $e_1, e_2, \ldots, e_m$ we let $J_i$ denote the Jordan block $J_{f^{e_i}}$.

\begin{theorem}
\label{thm:prob_lif}
Using the primary component Jordan form notation just introduced,
let $t$ be the greatest index such that $e_t = e_1$ (thus first index such that $e_t > e_{t+1}$).
For all $r \leq t$,\\
$$\Pbr{b}{r}(\dirsum_{i=1}^m J_i) = \Pbr{b}{r}(\dirsum_{i=1}^{t} J_1)
	 = \Pbr{b}{r}(\dirsum_{i=1}^{t} C_f).$$
[Lower order invariants don't matter, and the effect of a Jordan block is the same as that of a companion matrix.] 
\end{theorem}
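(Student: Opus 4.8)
The plan is to prove the two claimed equalities separately, each by exhibiting a reduction in the eigenstructure that leaves the $r$-faithfulness event unchanged. Throughout I work with a single primary component $A = \bigoplus_{i=1}^m J_i$ with $J_i = J_{f^{e_i}}$, $e_1 = \cdots = e_t > e_{t+1} \ge \cdots \ge e_m$, and $r \le t$. The governing principle is that $G$ is $r$-faithful to $A$ precisely when the $r$ largest invariant factors of the minimal generator $G$ of $S = \{U^T A^i V\}$ equal the $r$ largest invariant factors of $xI-A$, which for this primary component are $f^{e_1} \ge \cdots \ge f^{e_r}$, all equal to $f^{e_1}$ since $r \le t$. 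By Theorem 2.12 of \citet{KaVi:2004} the $i$th invariant factor of $G$ always divides $f^{e_i} = f^{e_1}$, so the event ``$G$ is $r$-faithful'' is exactly the event ``the $r$th largest invariant factor of $G$ equals $f^{e_1}$,'' i.e., $G$ has at least $r$ invariant factors equal to $f^{e_1}$ (equivalently, $g_{e_1 - 1} \ge r$ in the notation of Lemma \ref{lem:min_gen_bound}). This single-number reformulation is what makes both reductions go through.

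\emph{First equality: lower-order blocks don't matter.} I would show $\Pbr{b}{r}(\bigoplus_{i=1}^m J_i) = \Pbr{b}{r}(\bigoplus_{i=1}^t J_1)$. Passing to $\P_{f^{e_1}}$ via $\phi_{f^{e_1}}$ and Corollary \ref{corr:map}, the sequence $S$ corresponds to a matrix $\bar S \in \P_{f^{e_1}}^{b \times b}$, and by Theorem \ref{thm_pc_if} the snf of $G$ is determined by the quantities $r_i = \#\{\text{nonzero invariant factors of } \bar S f^i\}$ for $0 \le i < e_1$. The key observation is that for the faithfulness event we only need $r_{e_1 - 1} \ge r$, i.e., that $\bar S f^{e_1 - 1} \ne 0$ has at least $r$ nonzero invariant factors — but $\bar S f^{e_1-1}$ lives in $\P_{f^{e_1}}$ and its nonzero entries are exactly the images of the ``top level'' contributions coming from the blocks $J_i$ with $e_i = e_1$, namely $i \le t$; the blocks with $e_i < e_1$ contribute sequences annihilated by $f^{e_i}$, hence $f^{e_1-1} S = 0$ componentwise on those blocks (since $e_1 - 1 \ge e_i$ when $e_i \le e_1 - 1 < e_1$). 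So the rank condition defining the event depends only on the first $t$ blocks. One then checks that the joint distribution of the relevant portion of $S$ — the parts of $U, V$ hitting the first $t$ blocks — is exactly the distribution one gets from $\bigoplus_{i=1}^t J_1$ with fresh uniform $U', V' \in \F^{b t d \times b}$ restricted appropriately; marginalizing over the (independent, uniform) parts of $U,V$ hitting the remaining blocks changes nothing. I would make this rigorous by writing $U^T A^i V = \sum_{k=1}^m U_k^T J_k^i V_k$ with $U_k, V_k$ the block rows, noting these are independent uniform, and arguing the event is measurable with respect to $(U_1, V_1, \ldots, U_t, V_t)$ alone with the stated conditional law.

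\emph{Second equality: a Jordan block acts like a companion matrix.} Here I would show $\Pbr{b}{r}(\bigoplus_{i=1}^t J_1) = \Pbr{b}{r}(\bigoplus_{i=1}^t C_f)$, again via the single-number criterion but now comparing $e_1 = e$ versus $e = 1$. For $\bigoplus^t J_{f^e}$ the faithfulness event is $\bar S_{(e)} f^{e-1} \ne 0$ with $\ge r$ nonzero invariant factors in $\P_{f^e}$; for $\bigoplus^t C_f$ it is $\bar S_{(1)} \ne 0$ with $\ge r$ nonzero invariant factors in $\P_f$. The claim is that these two events have the same probability. The natural approach: analyze the structure of $U_k^T J_{f^e}^i V_k$ for a single Jordan block. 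Writing $J_{f^e}$ in terms of $C_f$ and the nilpotent shift, $J_{f^e}^i$ has a block lower-triangular form whose ``corner'' block is $C_f^i$ times a polynomial in the shift; after applying $\phi_{f^e}$ and then reducing modulo $f$ (multiplying by $f^{e-1}$ kills all but the lowest-weight part), the surviving contribution is governed by exactly the same data as a companion block $C_f$ — the random projection vectors restricted to the appropriate coordinate block of $J_{f^e}$ play the role of the projection vectors for $C_f$, and uniformity is preserved. So multiplication by $f^{e-1}$ is the device that collapses a height-$e$ block to a height-$1$ block while preserving the distribution of the rank-determining object. I expect the cleanest route is to prove a lemma: for $U_k, V_k$ uniform, the $\F$-valued object $\phi_{f^e}(U_k^T J_{f^e}^\bullet V_k)\, f^{e-1} \in \P_f^{b\times b}$ has the same distribution as $\phi_f(\tilde U_k^T C_f^\bullet \tilde V_k) \in \P_f^{b \times b}$ for fresh uniform $\tilde U_k, \tilde V_k \in \F^{b d \times b}$, and then the two probabilities coincide summand-by-summand.

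\emph{Main obstacle.} The routine parts are the measurability/marginalization arguments for dropping the low blocks; those follow from independence of block rows and uniformity. The genuinely delicate step is the second equality — verifying that the height-$e$ Jordan block, after the $\phi_{f^e}$-transform and multiplication by $f^{e-1}$, produces a rank-determining random matrix with \emph{exactly} the companion-matrix distribution, with no loss and no hidden dependence. The subtlety is that $\phi_{f^e}$ and $\phi_f$ are built from the Taussky similarity matrix $P$ (which depends on $f^e$ versus $f$), so one cannot simply quote Lemma \ref{lem:mapping}; one has to track how the nilpotent part of $J_{f^e}$ interacts with $P$ and with the action of $f^{e-1}$, and confirm that the map from $(U_k, V_k)$ to the surviving coefficients is linear, surjective onto $\P_f^{b\times b}$, and has fibers of constant size — so that a uniform input gives a uniform output matching the $C_f$ case. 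Establishing that linear-algebra identity cleanly is where the real work lies.
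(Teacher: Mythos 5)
Your proposal follows the paper's proof essentially step for step: reduce $r$-faithfulness (for $r \le t$) via Theorem~\ref{thm_pc_if} to a rank condition on $\phi_{f^e}(Sf^{e-1})$, discard the blocks with $e_i < e_1$ because $J_i f^{e-1} = 0$ there, and convert each remaining Jordan block to a companion matrix with uniformly distributed projections. The ``delicate'' distributional lemma you flag as the real work is exactly what the paper disposes of by citing the identity $\{U_i J_i^k f^{e-1} V_i\}_k = \{U_{i,e} C_f^k M V_{1,i}\}_k$ (with $M$ nonsingular and $U_{i,e}$, $V_{1,i}$ uniform blocks of $U_i$, $V_i$) from \citet{HJS:2016}, which operates at the level of sequences and thereby sidesteps your concern about the differing Taussky matrices for $f^e$ versus $f$.
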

\begin{proof}
Let $A = \dirsum_{i=1}^m J_i$, let $G$ minimally generate $S = \{ U A^i V \}_{i=0}^\infty$, 
and let $e = e_1$.  
By Theorem \ref{thm_pc_if}, $G$ is $r$-faithful if 
the number of invariant factors of $\phi_{f^e}\left(Sf^{e-1}\right)$  is at least
$r$.
Because $J_i f^{e-1} = 0$ for all $i \geq t$, $UAf^{e-1}V = \sum_{i=1}^m U_i J_i f^{e-1} V_i = \sum_{i=1}^t U_i J_i f^{e-1} V_i$, where $U_i,V_i$ are blocks of $U,V$ conforming to the blocks of $A$.  
Furthermore, $\{U_i J_i^i f^{e-1} V_i\}_{i=0}^\infty = \{U_{i,e} C_f^i M V_{1,i}\}_{i=0}^\infty$, 
where $M$ is nonsingular, and $U_{i,e}$ and $V_{1,i}$ are the rightmost and topmost blocks of $U_i$ and $V_i$ respectively \citep{HJS:2016}.  
Because $V_{1,i}$ is uniformly random and $M$ is nonsingular, $MV_{1,i}$ is uniformly random, and
therefore, $\Pbr{b}{r}(A) = \Pbr{b}{r}\left(\dirsum_{i=1}^t C_f \right)$.
\end{proof}

Thus we may focus attention on the probability in the case of companion matrices.
To complete the picture 
we will reduce 
to the probability that a sum of outer products has a given rank.  
First (Lemma \ref{lemma_outer})
we observe the relationship between sequences of projections of companion matrices and outer products.  
Then (Theorem \ref{thm:prob_companion})
we relate the sum of outer products to the probability.

\begin{lemma}
\label{lemma_outer}
Let $S = \{U^T C_f^i V\}_{i=0}^\infty$, where $f \in \P$ is an irreducible polynomial of degree $d$, and $U,V \in \F^{d \times b}$ are chosen uniformly at random.  Then $\phi_f(S)$ is the outer product of two uniformly random vectors in $\P_f^b$.
\end{lemma}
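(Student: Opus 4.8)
The plan is to unwind the definition of $\phi_f$ componentwise and use the multiplicative structure captured in Lemma~\ref{lem:mapping}. Write $u_j, v_k \in \F^d$ for the $j$-th column of $U$ and the $k$-th column of $V$, viewed as coefficient vectors of polynomials in $\P_f$. The $(j,k)$ entry of the sequence $S$ is the scalar sequence $S^{(jk)} = \{u_j^T C_f^i v_k\}_{i=0}^\infty$, so it suffices to show $\phi_f(S^{(jk)})$ depends on $j$ and $k$ only through a product form, i.e. there are fixed vectors $\alpha_j, \beta_k \in \P_f$ (independent of the specific choice of $U,V$, only linear in their columns) with $\phi_f(S^{(jk)}) = \alpha_j\,\beta_k$ (product in the algebra $\P_f$), or equivalently $\phi_f(S^{(jk)})_\ell = \sum_{p+q\equiv\ell} (\alpha_j)_p (\beta_k)_q \bmod f$. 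Then the matrix $\phi_f(S) \in \P_f^{b\times b}$ is exactly the outer product of the vector $(\alpha_1,\ldots,\alpha_b)^T$ with $(\beta_1,\ldots,\beta_b)$ over the ring $\P_f$, and one checks that as $U,V$ range uniformly over $\F^{d\times b}$ these two $\P_f^b$-vectors are uniformly random and independent.

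For the single scalar sequence: first I would identify $\omega_f(S^{(jk)}) \in \P_f$ explicitly. By the definition of $\rho = K_f$ we have $C_f^i v_k = \rho(v_k)\,e_i$ where $e_i$ is reduced appropriately; more directly, $u_j^T C_f^i v_k$ is a bilinear pairing, and stacking the first $d$ values gives $\omega_f(S^{(jk)}) = \rho(v_k)^T u_j$ up to the transpose bookkeeping already done in the proof of Lemma~\ref{lem:mapping} (that proof computed exactly such a Hankel-to-Krylov identity). Using the intertwining matrix $P$ from the definition of $\phi_f$, namely $P\rho^T(a) = \rho(a)P$, we get $\phi_f(S^{(jk)}) = P\,\rho(v_k)^T u_j = \rho(v_k)\,P u_j$. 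Since $\rho(v_k)$ is multiplication by $v_k$ in $\P_f$, this reads $\phi_f(S^{(jk)}) = v_k \cdot (P u_j)$ as a product in $\P_f$. Setting $\alpha_j := P u_j$ and $\beta_k := v_k$ gives the claimed product form, and $\phi_f(S) = \big[\alpha_j \beta_k\big]_{j,k}$ is the outer product $\boldsymbol\alpha\,\boldsymbol\beta^T$ with $\boldsymbol\alpha = (Pu_1,\ldots,Pu_b)^T$, $\boldsymbol\beta = (v_1,\ldots,v_b)^T \in \P_f^b$.

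Finally I would argue the distributional claim: $V$ uniform in $\F^{d\times b}$ means its columns $v_k$ are i.i.d.\ uniform in $\F^d \cong \P_f$, so $\boldsymbol\beta$ is uniform in $\P_f^b$; since $P$ is a fixed nonsingular matrix, $u_j \mapsto Pu_j$ is a bijection of $\F^d$, so $\boldsymbol\alpha$ is uniform in $\P_f^b$; and $U,V$ independent gives $\boldsymbol\alpha,\boldsymbol\beta$ independent. Hence $\phi_f(S)$ is the outer product of two uniformly random vectors in $\P_f^b$, as required. The only real subtlety is the transpose/intertwining bookkeeping connecting $\omega_f$, the Krylov matrix $\rho$, and $P$; this is precisely the computation already carried out inside the proof of Lemma~\ref{lem:mapping}, so the main obstacle is just presenting it cleanly at the level of a single $(j,k)$-entry rather than re-deriving it, and making sure the "multiplication by $v_k$" reading of $\rho(v_k)$ is stated so that the outer-product structure over the ring $\P_f$ is manifest.
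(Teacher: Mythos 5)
Your proposal is correct and follows essentially the same route as the paper: compute $\omega_f$ of the $(j,k)$ entry as $\rho(v_k)^T u_j$, apply the intertwining matrix $P$ to get $\phi_f(S)_{jk} = \rho(v_k)(Pu_j) = v_k(Pu_j)$, and conclude from the nonsingularity of $P$ and the independence of $U,V$ that this is an outer product of uniformly random vectors in $\P_f^b$. No meaningful differences from the paper's argument.
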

\begin{proof}
The $ij$ entry of $S$ satisfies
\[
\omega_f (S_{ij}) = (U_i^TV_j, U_i^TC_fV_j, \ldots, U_i^TC_f^{d-1}V_j)^T =  \rho(V_j)^T U_i, 
\]
where $U_i$ denotes the $i$-th column of $U$ and similarly for $V_j$.
Consequently the $ij$ entry of $\phi_f(S)$ 
\[
\phi_f(S)_{ij} = P \rho(V_j)^T U_i = \rho(V_j)(PU_i) = V_j (PU_i).
\]
Since $P$ is nonsingular and $U_i$ is uniformly random, and $PU_i$ is also uniformly random.  Therefore, $\phi_f(S)$ is the outer product of two uniformly random vectors in $\P_f^b$.
\end{proof}

\begin{definition}
Let $Q_{q,b,r}(t)$ denote the probability that $r = \rank(A)$ when $A$ is a sum of $t$ outer products, $A = \sum_{i=1}^t u_i v_i^T$, and the vectors $u_i,v_i \in \F^b$ are chosen uniformly at random. 
\end{definition}

\begin{theorem}
\label{thm:prob_companion}
For irreducible $f \in \P$ of degree $d$,  
\[\Pbr{b}{r}(\dirsum_{i=1}^t C_f) = \sum_{i=r}^t Q_{q^d,b,i}(t).\]
\end{theorem}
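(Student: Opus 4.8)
The plan is to combine the structural reduction from Theorem \ref{thm_pc_if} (specialized to $e=1$) with the outer-product description from Lemma \ref{lemma_outer}. Set $A = \dirsum_{i=1}^t C_f$, let $U,V \in \F^{n\times b}$ be uniformly random with $n = td$, and let $S = \{U^T A^i V\}_{i=0}^\infty$ with minimal generator $G$. Since $f$ is irreducible (so $e=1$ and $f^e = f$), Theorem \ref{thm_pc_if} tells us that the number of nontrivial invariant factors of $G$ equals $r_0$, the number of nonzero invariant factors of $\phi_f(S) \in \P_f^{b\times b}$; that is, the number of nontrivial invariant factors of $G$ equals $\rank_{\P_f}(\phi_f(S))$, where the rank is taken over the field $\P_f = \F[x]/\langle f\rangle$, which has $q^d$ elements. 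Moreover $G$ is $r$-faithful to $A$ exactly when $G$ has at least $r$ nontrivial invariant factors, since $A$ itself has exactly $t$ nontrivial invariant factors all equal to $f$, and an invariant factor of $G$ of the form $f^j$ with $j\ge 1$ must actually be $f$ (it divides the corresponding invariant factor of $xI-A$, which is $f$). Hence $\Pbr{b}{r}(A)$ equals the probability that $\rank_{\P_f}(\phi_f(S)) \geq r$.

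Next I would decompose $\phi_f(S)$ over the block structure. Writing $U,V$ in blocks $U_i, V_i \in \F^{d\times b}$ conforming to the $t$ companion blocks, we have $U^T A^i V = \sum_{i=1}^t U_i^T C_f^i V_i$, so by $\F$-linearity of $\phi_f$,
\[
\phi_f(S) = \sum_{i=1}^t \phi_f\bigl(\{U_i^T C_f^j V_i\}_{j=0}^\infty\bigr).
\]
By Lemma \ref{lemma_outer}, each summand is the outer product $w_i z_i^T$ of two vectors $w_i, z_i \in \P_f^b$ that are uniformly random; and since the $(U_i,V_i)$ are independent across $i$, the $t$ pairs $(w_i,z_i)$ are jointly independent and uniformly random. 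Therefore $\phi_f(S)$ is distributed exactly as a sum of $t$ independent uniformly random outer products of vectors in $\K^b$, where $\K = \P_f$ is a field of order $q^d$. By the definition of $Q_{q^d,b,i}(t)$, the probability that this sum has rank exactly $i$ is $Q_{q^d,b,i}(t)$, and the rank can be at most $t$. Summing over $i$ from $r$ to $t$ gives
\[
\Pbr{b}{r}(A) = \P[\rank \geq r] = \sum_{i=r}^t Q_{q^d,b,i}(t),
\]
as claimed.

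The main obstacle I anticipate is making precise the identification in the first paragraph: that $G$ being $r$-faithful is equivalent to $\phi_f(S)$ having rank at least $r$ over $\P_f$. This requires carefully invoking Theorem \ref{thm_pc_if} with $e=1$ — noting that in that case the Smith form of $G$ over $\P$ consists of copies of $f$ and copies of $1$, with the number of $f$'s equal to $m_0 = r_0 = \rank_{\P_f}(\phi_f(S))$ — and then arguing that matching the top $r$ invariant factors of $G$ with those of $xI-A$ is automatic once $G$ has at least $r$ nontrivial invariant factors, because each is forced to equal $f$ by the divisibility relation of Kaltofen--Villard. The remaining steps (linearity of $\phi_f$, independence of the blocks, and the rank-versus-outer-product bookkeeping) are routine given Lemma \ref{lemma_outer}.
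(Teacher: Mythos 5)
Your argument is correct and follows essentially the same route as the paper's own proof: reduce via Theorem \ref{thm_pc_if} (with $e=1$) to the rank of $\phi_f(S)$ over the field $\P_f$, then use Lemma \ref{lemma_outer} to identify $\phi_f(S)$ as a sum of $t$ independent random outer products, so that $Q_{q^d,b,i}(t)$ gives the rank distribution. You additionally spell out two details the paper leaves implicit --- the block decomposition of $\phi_f(S)$ by $\F$-linearity and the equivalence of $r$-faithfulness with $G$ having at least $r$ nontrivial invariant factors --- and both are handled correctly.
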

\begin{proof}
Let $S = \{U^T \dirsum_{i=1}^t C_f^iV\}_i$, where $U,V \in \F^{d \times b}$ are chosen at uniformly random.  Let $G \in \P^{b \times b}$ minimally generate $S$.  By Theorem \ref{thm_pc_if}, $G$ has $r$ nontrivial invariant factors, where $r = \rank(\phi_f(S))$.  
By Lemma~\ref{lemma_outer}, $\phi_f(S)$ is the sum of outer products of uniformly random 
vectors in $\P_f^b$, and the probability that $i = \rank(\phi_f(S))$ is $Q_{q^d,b,i}(t)$.
\end{proof}

The probability that the sum of $t$ outer products has rank $r$ can be computed with the following recurrence \citep{HJS:2016}.

\begin{theorem}
\label{thm:probrank}
\[
Q_{q,b,r}(t) = \begin{cases}
  0 & \text{if $r < 0$ or $r > \min(t,b)$,} \\
  1 & \text{if $r = 0$ and $t = 0$,} \\
  \psi_{t,r} & otherwise,
\end{cases}
\]
where $\psi_{t,r} = Q_{q,b,r-1}(t-1) U_{r-1} + Q_{q,b,r}(t-1) (1-U_r-D_r) + Q_{q,b,r+1}(t-1) D_{r+1},$
with $U_r = \left(1-1/{q^{b-r}}\right)^2$ and $D_r = q^{r-1} \left(q^r - 1\right) / q^{2b}$.
\end{theorem}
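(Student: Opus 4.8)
The plan is to treat the sequence of ranks $\rank\bigl(\sum_{i=1}^t u_iv_i^T\bigr)$, for $t=0,1,2,\ldots$, as a Markov chain on $\{0,1,\ldots,b\}$ and to read the recurrence off its one-step transition probabilities. The base cases are immediate: the empty sum is the zero matrix, which has rank $0$, and a sum of $t$ outer products of vectors in $\F^b$ has rank between $0$ and $\min(t,b)$; this is the content of the first two clauses of the piecewise definition. For $t\ge 1$ and $0\le r\le\min(t,b)$ I would condition on $s:=\rank\bigl(\sum_{i=1}^{t-1} u_iv_i^T\bigr)$ and study how the rank changes when the fresh, independent outer product $u_tv_t^T$ is added.

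The crux is the following claim: for \emph{any} fixed matrix $M\in\F^{b\times b}$ of rank $s$, over uniform independent $u,v\in\F^b$,
\[
\Pr[\rank(M+uv^T)=s']=\begin{cases}U_s & \text{if }s'=s+1,\\ 1-U_s-D_s & \text{if }s'=s,\\ D_s & \text{if }s'=s-1,\\ 0 & \text{otherwise;}\end{cases}
\]
in particular the transition probabilities depend on $M$ only through $\rank(M)$. Granting the claim, the law of total probability gives $Q_{q,b,r}(t)=\sum_s Q_{q,b,s}(t-1)\,\Pr[\text{rank moves }s\to r]$, and since only $s\in\{r-1,r,r+1\}$ contribute this collapses to $Q_{q,b,r-1}(t-1)U_{r-1}+Q_{q,b,r}(t-1)(1-U_r-D_r)+Q_{q,b,r+1}(t-1)D_{r+1}=\psi_{t,r}$, completing an induction on $t$.

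To prove the claim I would use two facts about a rank-one update. First, $\rank(M+uv^T)=s+1$ precisely when $u\notin\mathrm{Col}(M)$ and $v\notin\mathrm{Row}(M)$; this follows, for instance, because the bordered matrix $\left(\begin{smallmatrix}M&u\\-v^T&1\end{smallmatrix}\right)$ has rank $1+\rank(M+uv^T)$ and its rank is $s+2$ exactly under that condition. Since $u$ and $v$ are independent and each lands in its respective $s$-dimensional subspace with probability $q^s/q^b=q^{-(b-s)}$, the rank increases with probability $(1-q^{-(b-s)})^2=U_s$. Second, for $s\ge1$ fix a rank factorization $M=XY^T$ with $X,Y\in\F^{b\times s}$ of full column rank; if $u=X\alpha$ and $v=Y\beta$ then $M+uv^T=X(I_s+\alpha\beta^T)Y^T$, so the rank drops to $s-1$ iff $\det(I_s+\alpha\beta^T)=1+\beta^T\alpha=0$ by the matrix determinant lemma, a statement valid in every characteristic. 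Conditioned on $u\in\mathrm{Col}(M)$ and $v\in\mathrm{Row}(M)$, the coordinate vectors $\alpha,\beta$ are independent and uniform on $\F^s$, whence $\Pr[\beta^T\alpha=-1]=(1-q^{-s})/q$; multiplying by $\Pr[u\in\mathrm{Col}(M)]\,\Pr[v\in\mathrm{Row}(M)]=q^{-2(b-s)}$ gives $q^{s-1}(q^s-1)/q^{2b}=D_s$. The remaining probability is $1-U_s-D_s$ by complementation, and the boundary values $U_b=0$ and $D_0=0$ are consistent with the closed forms.

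The step I expect to be the main obstacle is the rank-decrease analysis: in contrast to the rank-increase criterion, pinning down exactly when adding $u_tv_t^T$ genuinely lowers the rank needs the rank-factorization reduction together with the identity $\det(I+\alpha\beta^T)=1+\beta^T\alpha$, and one must confirm it behaves correctly in characteristic $2$ where $-1=1$; one also has to argue carefully that conditioning on $u\in\mathrm{Col}(M)$ and $v\in\mathrm{Row}(M)$ really does make the coordinate vectors uniform and independent, so that $\Pr[\beta^T\alpha=-1]$ is exactly $(q^s-1)/q^{s+1}$. Everything else reduces to bookkeeping with total probability and the two trivial boundary clauses.
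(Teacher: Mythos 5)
Your proof is correct: the Markov-chain formulation with one-step transition probabilities $U_s$ (rank up, requiring $u\notin\mathrm{Col}(M)$ and $v\notin\mathrm{Row}(M)$) and $D_s$ (rank down, via the rank factorization and $\det(I_s+\alpha\beta^T)=1+\beta^T\alpha$) is exactly the intended argument --- the paper itself omits the proof, citing \citet{HJS:2016}, but its later use of $U_r$ and $D_r$ in the proof of Lemma \ref{lem:tech} describes them as precisely these transition probabilities. Your verification of the characteristic-$2$ case and the boundary values $U_b=0$, $D_0=0$ is sound, so nothing is missing.
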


Now a formula for $\Pbr{b}{r}(A)$ follows from Theorems \ref{thm_pc_if}, \ref{thm:prob_lif} 
and \ref{thm:prob_companion}. A 
Schur complement argument is involved in separating the leading repeated block from the lower exponent blocks.

\begin{lemma}
\label{lem_prob_rec}
Let $f \in \P$ be an irreducible polynomial of degree $d$, and let $e > 1$.  Let $G = I_r \dirsum
 0_{b-r}$ and $H = \left(\begin{matrix} A & B \\ C & D \end{matrix}\right) \in \P_{f^e}^{b \times b}$ be uniformly random, where $f | H$ and $A,B,C,D$ are blocks conforming to the dimensions of the blocks of $G$.  
Then, $G+H \sim (I_r+A) \dirsum Y$, where $Y$ is a  projection of $H$.
\end{lemma}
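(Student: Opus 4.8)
The plan is to run a block Schur-complement (block Gaussian elimination) using the $(1,1)$ block $I_r+A$ as pivot, after first checking that this pivot is invertible over $\P_{f^e}$, and then to push the randomness of $H$ through the elimination and read off the distribution of the resulting Schur complement. The key ring-theoretic fact is that $\P_{f^e}=\F[x]/\langle f^e\rangle$ is a finite local (indeed chain) ring whose maximal ideal is $\langle f\rangle$, so an element is a unit exactly when it is not divisible by $f$, and a square matrix over $\P_{f^e}$ is unimodular exactly when its determinant is a unit (this is also why the equivalence $\sim$ and the Smith form over $\P_{f^e}$ are well defined). Since $f\mid H$ forces $f\mid A$, we get $I_r+A\equiv I_r\pmod f$, hence $\det(I_r+A)\equiv 1\pmod f$ is a unit, so $I_r+A\in\mathrm{GL}_r(\P_{f^e})$ and $(I_r+A)^{-1}$ has entries in $\P_{f^e}$.

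Next, writing $W_L=\left(\begin{smallmatrix}I&0\\-C(I_r+A)^{-1}&I\end{smallmatrix}\right)$ and $W_R=\left(\begin{smallmatrix}I&-(I_r+A)^{-1}B\\0&I\end{smallmatrix}\right)$ --- both block-triangular with identity diagonal blocks, hence unimodular over $\P_{f^e}$ --- a one-line block computation gives
\[
W_L\,(G+H)\,W_R \;=\; W_L\begin{pmatrix}I_r+A&B\\C&D\end{pmatrix}W_R \;=\; \begin{pmatrix}I_r+A&0\\0&\,D-C(I_r+A)^{-1}B\,\end{pmatrix},
\]
so $G+H\sim(I_r+A)\oplus Y$ with $Y:=D-C(I_r+A)^{-1}B$.

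It remains to identify the law of $Y$. Since $B,C,D$ are each divisible by $f$ and $(I_r+A)^{-1}$ has entries in $\P_{f^e}$, the term $C(I_r+A)^{-1}B$ is divisible by $f^2$, so in particular $f\mid Y$. Conditioning on $A,B,C$: the block $D$ is uniform over the $(b-r)\times(b-r)$ matrices over $\P_{f^e}$ that are divisible by $f$, while $C(I_r+A)^{-1}B$ is a fixed such matrix, so by translation invariance of the uniform law on that additive group, $Y=D-C(I_r+A)^{-1}B$ is again uniform over the same set and independent of $(A,B,C)$, hence of $A$. Thus $Y$ has exactly the distribution prescribed for $H$ but with $b$ replaced by $b-r$; that is, $Y$ is a projection of $H$, which completes the argument.

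The block-elimination identity and the divisibility bookkeeping are routine; the step that needs to be stated carefully is the distributional claim --- namely, spelling out that ``$Y$ is a projection of $H$'' means ``$Y$ is a uniformly random $(b-r)\times(b-r)$ matrix over $\P_{f^e}$ divisible by $f$, independent of the retained block $A$'', and observing that translation invariance of the uniform measure on the $f$-divisible matrices is precisely what delivers this. (For $e=1$ the hypothesis $f\mid H$ forces $H=0$ and the statement is trivial; the case $e>1$ is the one that matters.)
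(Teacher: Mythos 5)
Your proof takes the same route as the paper's: the same unimodular block-triangular transformations (your $W_L,W_R$ are exactly the paper's $P,Q$), the same observation that $I_r+A$ is invertible because $f\mid A$ and $\P_{f^e}$ is local, and the same Schur complement $Y=D-C(I_r+A)^{-1}B$. The only difference is that you make explicit the distributional content of ``$Y$ is a projection of $H$'' --- uniformity of $Y$ over the $f$-divisible $(b-r)\times(b-r)$ matrices, independent of $A$, via translation invariance --- which the paper leaves implicit; that elaboration is correct and is in fact what the downstream use in Theorem~\ref{thm:probrec} requires.
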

\begin{proof}
Because $f|A$ and $\P_{f^e}$ is a local ring, the matrix $I_r + A$ is nonsingular.
Let $$P = \left(\begin{matrix}
I & 0 \\ -C(I_r+A)^{-1} & I
\end{matrix}\right)\mbox{ and }Q = \left(\begin{matrix}
I & -(I_r+A)^{-1}B \\ 0 & I
\end{matrix}\right).$$
Then, $$P(G+H)Q = (I_r+A) \dirsum (-C(I_r+A)^{-1}B+D) = (I_r+A) \dirsum Y,$$
where $Y = -C(I_r+A)^{-1}B+D$.
$P$ and $Q$ are trivially unimodular, and $-(I_r+A)^{-1}B$ is the top-right block of $Q$.  Therefore
, $Y$ is a $b-r \times b-r$ projection of $H$.
\end{proof}

\begin{theorem}
\label{thm:probrec}
Let $f \in \P$ be an irreducible polynomial of degree $d$, and   
$A = \dirsum_{i=1}^m J_i$ (notation of theorem \ref{thm:prob_lif}) and let $t$ be the greatest index such that $J_t = J_1$.
Then
\[
\Pbr{b}{r}(A) = \begin{cases}
	1 & \text{if $A = 0$,} \\
        \Pbr{b}{t}(A) \Pbr{b-t}{r-t}(\dirsum_{i=t+1}^m J_i) & \text{if $r > t$,} \\
	\sum_{i=r}^{t} Q_{q^d,b,i}(t) & \text{if $r \leq t$.}
\end{cases}
\]
\end{theorem}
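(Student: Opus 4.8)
The plan is to treat the three cases separately, the first two being essentially immediate and the third carrying the real content.

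If $A=0$ (an empty direct sum), the only $(U,V)$-projection is the zero sequence, its minimal generator is the identity, and $xI-A$ has no nontrivial invariant factors, so $G$ is vacuously $r$-faithful and the probability is $1$. If $r\le t$, the claim is Theorem \ref{thm:prob_lif} (which replaces $\dirsum_{i=1}^m J_i$ by $\dirsum_{i=1}^t C_f$ without changing the probability in this range) composed with Theorem \ref{thm:prob_companion} (which gives $\Pbr{b}{r}(\dirsum_{i=1}^t C_f)=\sum_{i=r}^t Q_{q^d,b,i}(t)$); the special case $r=t$ records $\Pbr{b}{t}(A)=Q_{q^d,b,t}(t)$, the leading factor in the third case.

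For $r>t$ one has $e_1>e_{t+1}$, so $e:=e_1>1$ and Lemma \ref{lem_prob_rec} applies (if $t=m$ the matrix $\dirsum_{i=t+1}^m J_i$ is empty and we are in the first case). Split $A=B_1\dirsum B_2$, $B_1=\dirsum_{i=1}^t J_1$, $B_2=\dirsum_{i=t+1}^m J_i$; since $\Pbr{b}{r}(A)$ depends only on the similarity classes of $B_1$ and $B_2$, replace $B_1$ by $\dirsum_{i=1}^t C_{f^e}$. Splitting the projection vectors conformally and writing $S^{(j)}$ for the projection of $B_j$, applying the computation of Lemma \ref{lemma_outer} (which does not use irreducibility of the modulus) to each block of $B_1$ shows $M_1:=\phi_{f^e}(S^{(1)})$ is a sum of $t$ independent outer products of uniform vectors in $\P_{f^e}^b$, while $M_2:=\phi_{f^e}(S^{(2)})$ is independent of $M_1$ and divisible by $f$ (since $f^{e-1}$ annihilates $B_2$), with $\phi_{f^e}(S)=M_1+M_2$. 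The invariant factors of $G$ are read off from those of $\phi_{f^e}(S)$ via Theorem \ref{thm_pc_if}; since $xI-A$ has exactly $t$ primary invariant factors equal to $f^e$, $G$ is $r$-faithful iff (i) $\phi_{f^e}(S)$ has exactly $t$ unit invariant factors over the local ring $\P_{f^e}$, and (ii) its remaining invariant factors yield, via Theorem \ref{thm_pc_if} applied to $B_2$, the $r-t$ largest invariant factors of $xI-B_2$. Event (i) is $\rank(M_1\bmod f)=t$, and since $M_1\bmod f$ is a sum of $t$ outer products over $\F_{q^d}$ this has probability $Q_{q^d,b,t}(t)=\Pbr{b}{t}(A)$; condition on it. Then $M_1$ has Smith normal form $I_t\dirsum 0_{b-t}$ over $\P_{f^e}$, so there are unimodular $X,Y$ over $\P_{f^e}$ depending only on $M_1$ (hence independent of $M_2$) with $XM_1Y=I_t\dirsum 0_{b-t}$, whence $X\phi_{f^e}(S)Y=(I_t\dirsum 0_{b-t})+XM_2Y$ with $f\mid XM_2Y$; the algebraic Schur-complement identity of Lemma \ref{lem_prob_rec} then splits this into an invertible $t\times t$ block and a $(b-t)\times(b-t)$ matrix $Z$ with $f\mid Z$, and event (ii) becomes exactly the statement that the minimal generator of a $(b-t)$-block projection of $B_2$ with $\phi_{f^e}$-image $Z$ is $(r-t)$-faithful to $B_2$.

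Everything thus reduces to the conditional law of $Z$ given $M_1$: the claim is that $Z$ is distributed as $\phi_{f^e}$ of a uniformly random $(b-t)$-block projection of $B_2$. Granting it, event (ii) conditionally has probability $\Pbr{b-t}{r-t}(B_2)$ by the definition of that quantity, and—since this does not depend on which $M_1$ we conditioned on—multiplying by $\Pbr{b}{t}(A)$ yields the stated product. I expect this identification to be the crux; it is the distributional shadow of the algebraic Lemma \ref{lem_prob_rec}, and its ingredients are: $\phi_{f^e}$ intertwines the right $\P^{b\times b}$-action on sequences with right multiplication in $\P_{f^e}^{b\times b}$ (Corollary \ref{corr:map}) and commutes with matrix transposition; for $Y$ unimodular over $\P_{f^e}$ the induced operation $V\mapsto\sum_j B_2^j V\tilde Y_j$ on projection vectors (any polynomial lift $\tilde Y$ of $Y$) is well defined and a bijection, both because $f^e(B_2)=0$, hence preserves uniformity; $B_2^T\sim B_2$, giving the mirror statement for left multiplication; and the Schur complement $Z=D'-C'(I_t+A')^{-1}B'$ then rewrites, under these rules, as $\phi_{f^e}$ of a $(b-t)$-block projection of $B_2$ whose right projector is $(\mathrm{Id}-L)\hat V$ with $L$ a linear operator factoring through $f(B_2)$—so nilpotent, whence $\mathrm{Id}-L$ is invertible and uniformity of $\hat V$ survives.
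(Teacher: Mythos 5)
Your proof is correct and follows essentially the same route as the paper's: the $r\le t$ case via Theorems \ref{thm:prob_lif} and \ref{thm:prob_companion}, and the $r>t$ case by conditioning on the leading $t$ invariants, normalizing $\phi_{f^e}$ of the top block to $I_t\dirsum 0_{b-t}$, and applying the Schur-complement Lemma \ref{lem_prob_rec}. In fact you make explicit the one point the paper asserts tersely --- that the Schur complement is distributed as a uniformly random $(b-t)\times(b-t)$ projection of $\dirsum_{i=t+1}^m J_i$ --- which is a welcome amount of extra care.
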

\begin{proof}
The probability that the largest invariant factor is preserved at least $r$ times is given in Theorem \ref{thm:prob_companion}, and by Theorem \ref{thm:prob_lif} is independent of smaller invariant factors being preserved.  Therefore, if $r \leq t$, the probability that the largest $r$ invariant factors are preserved is $\Pbr{b}{r}(\dirsum_{j=1}^{t} J_i) = \sum_{i=r}^{t} Q_{q^d,b,t}(i)$.

If $r > t$, the $t$ largest invariant factors must be successfully preserved along with the
next $r-t$ invariant factors.  
Let $G$ minimally generate $S = \{ U A^i V \}_{i=0}^\infty$.
By Theorem~\ref{thm_pc_if}, if the leading $t$ invariant factors of $G$ are correct if
there are $t$ ones in the Smith normal form of $\phi_{f^e}(S)$.  Therefore, by a unimodular
transformation, $\phi_{f^e}(\dirsum_{i=1}^t J_i)$ is equivalent to a block matrix $I_t \dirsum 0_{b-r}$
in the hypothesis of Lemma~\ref{lem_prob_rec} and consequently the remaining blocks  
$\phi_{f^e}(\dirsum_{i=t+1}^m J_i)$ are projected onto a $(b-t) \times (b-t)$ block.  Since the projection was
accomplished by a unimodular transformation,
the probability that the remaining $r-t$ invariant factors are successfully preserved 
is the probability that a uniformly random $(b-t) \times (b-t)$ projection preserves them.
\end{proof}

\section{Examples}

In this section we present several examples of how to compute $P_{q,b,r}(A)$ for given matrix structures.  Let $q=2$, and let $f,g,h \in \P$ be distinct irreducible with $\deg(f) = \deg(g) = 1$ and $\deg(h) = 2$.  Note that these are the three lowest degree irreducible polynomials in $\P$.  For $M \in \F^{n \times n}$ let $F(M)$ denote the list of invariant factors of $xI-M$.  For example, let $F(A) = \{f^2gh, fg\}$.

To compute $P_{q,b,r}(A)$, first $A$ is split into its distinct factors (Theorem \ref{thm:probprod}) \[
P_{q,b,r}(A) = P_{q,b,r}(C_{f^2} \dirsum C_f) P_{q,b,r}(C_g \dirsum C_g) P_{q,b,r}(C_h).
\]

When $r=1$, applying Theorem \ref{thm:probrec} yields
\begin{eqnarray*}
P_{q,b,r}(A) &=& P_{q,b,1}(C_{f^2}) P_{q,b,1}(C_g \dirsum C_g) P_{q,b,1}(C_h) \\
&=& Q_{q,b,1}(1) (Q_{q,b,1}(1) + Q_{q,b,1}(2)) Q_{q^2,b,1}(1)
\end{eqnarray*}

Otherwise, when $r\geq 2$,
\begin{eqnarray*}
P_{q,b,r}(A) &=& P_{q,b,1}(C_{f^2}) P_{q,b-1,1}(C_f) P_{q,b,2}(C_g \dirsum C_g) P_{q,b,1}(C_h) \\
&=& Q_{q,b,1}(1) Q_{q,b-1,1}(1) Q_{q,b,2}(2) Q_{q^2,b,1}(1)\\
&=& Q_{q,b,2}(2) Q_{q,b,2}(2) Q_{q^2,b,1}(1).
\end{eqnarray*}

\begin{table}
\begin{center}
\caption{$P_{q,b,r}(M_i)$ ({\bf worst case} probability $P_{q,b,r}(8)$), $q = 2$}
\begin{tabular}{|c||c|c|c|c|c|c|c|c|} \hline
& \multicolumn{2}{c|}{r=2} & \multicolumn{2}{c|}{r=3} & \multicolumn{2}{c|}{r=4} & \multicolumn{2}{c|}{r=5}           \\ \hline
      & b=2         & b=3         & b=3         & b=4         & b=4         & b=5         & b=5         & b=6         \\ \hline
$M_1$ & {\bf 0.010} & {\bf 0.158} & 0.158       & 0.435       & 0.435       & 0.674       & 0.674       & 0.825       \\
$M_2$ & 0.053       & 0.439       & {\bf 0.011} & {\bf 0.142} & 0.142       & 0.398       & 0.398       & 0.639       \\
$M_3$ & 0.095       & 0.625       & 0.041       & 0.393       & {\bf 0.009} & {\bf 0.126} & 0.126       & 0.374       \\
$M_4$ & 0.084       & 0.570       & 0.028       & 0.286       & 0.069       & 0.382       & {\bf 0.056} & {\bf 0.275} \\
$M_5$ & 0.070       & 0.367       & 0.367       & 0.647       & 0.647       & 0.817       & 0.817       & 0.907       \\ \hline
\end{tabular}
\label{tab:probs}
\end{center}
\end{table}

Let $M_1, M_2, M_3, M_4, M_5 \in \F^{8 \times 8}$, $F(M_1) = \{fgh,fgh\}$, $F(M_2) = \{fgh, fg, fg\}$, $F(M_3) = \{fg,fg,fg,fg\}$, $F(M_4) = \{fg,fg,fg,f,f\}$, and $F(M_5) = \{f^2h, f^2h\}.$  To illustrate the effect of invariant structure on $P_{q,b,r}(M)$, Table \ref{tab:probs} shows $P_{q,b,r}(M_i)$ computed for $b = \{2\ldots 6\}$ and $r=\{2\ldots 5\}$.  Note that $M_1, M_2, M_3,$ and $M_4$ are the worst case matrices for $r=2,3,4,$ and $5$, respectively, using the worst case construction given in the following section.

We also performed an experimental check on the probabilities $P_{3,5,r}(M)$.
In Novocin et al. (2015) the Ding-Yuan family of matrices were among those studied, with the goal of developing a formula for their ranks over the field $\F_3$. 
One matrix in this family is $M \in \F_3^{27\times 27}$ 
having invariants $\{x,\ldots,x,xf, xfg\}$ wherein $x$ appears 19 times and $f=x^2+x, g=x^4 + x^3 + x + 2$.  
We computed with $b = 5$ 
and obtained the results in table \ref{tab:thousand}, giving the probability 
$P_{3,5,r}(M)-P_{3,5,r+1}(M)$ that exactly $r$ invariants are correct and for each $r$ the percentage of the ten thousand trials in which exactly $r$ correct invariants resulted.  The data is quite consistent with theory.
It turns out that 3 or more correct invariants is sufficient in this case to infer the rank.  Only two trials failed to provide the first 3 invariants. 
The example $M$ is further discussed in section \ref{sec:post-hoc-ex}.
\begin{table}
\begin{center}
\caption{Ten thousand trials, $b = 5$, $M$ has 19 invariants: $(x,\ldots,x,xf,xfg)$}
\begin{tabular}{|c||c|c|c|c|} \hline
$b$ invariants & $x,x,x,xf,xfg$ & $1,x,x,xf,xfg$ & $1,1,x,xf,xfg$ & $1,1,1,*,*$
 \\
$r$ = number correct & 5 & 4 & 3 & 2,1,0 \\ \hline
probability & 56.11 & 41.91 & 1.94 & 0.04 \\ \hline
\% of trials& 55.80 & 42.08 & 2.10 & 0.02 \\ \hline
\end{tabular}
\label{tab:thousand}
\end{center}
\end{table}
\section{Worst Case}
\label{sec:wc}

Recall from the introduction that we 
define $P_{q,b,r}(n) = \min_{A \in \F^{n \times n}} \Pbr{b}{r}(A)$.
The formula we will derive for $P_{q,b,r}(n)$ can be used to determine the necessary blocksize needed to preserve the leading $r$ invariant factors with a specified probability of success.  It will show that with a blocksize modestly larger than $r$ the probability of preserving $r$ invariant factors is quite high, even for small fields.  The construction and formula generalize Theorem 20 from \citep{HJS:2016} which obtained a similar bound for preserving the minimal polynomial.  
To develop the formula, we begin with the following properties derived from
Theorems~\ref{thm:probprod} and \ref{thm:probrec} to compute the probability for the 
leading Jordan block and the Schur complement to induct on the remaining blocks.

\begin{lemma} 
\label{lem:tech}
\mbox{}
\begin{enumerate}
\item $\Pbr{b}{r}( \underbrace{C_f \oplus \ldots \oplus C_f}_r ) \leq \Pbr{b}{r}( \underbrace{C_f \oplus \ldots \oplus C_f}_t )$ for $r < t$.
\item Let $f$ and $g$ be irreducible polynomials of degree
$d$ and $e$ respectively with $d < e$, then \\
$\Pbr{b}{r}( \underbrace{C_f \oplus \ldots \oplus C_f}_r )
< \Pbr{b}{r}( \underbrace{C_g \oplus \ldots \oplus C_g}_r )$.
\item Let $f$ and $g$ be irreducible polynomials both of degree
$d$ and let r be given.  Then \\
$\Pbr{b}{s}( \underbrace{C_f \oplus \ldots \oplus C_f}_s \oplus \underbrace{C_g \oplus \ldots \oplus C_g}_t )$, for $s + t = r$ is minimized when $s = r$ and $t=0$.
\end{enumerate}
\end{lemma}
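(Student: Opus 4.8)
The plan is to route all three parts through Theorems~\ref{thm:probprod}, \ref{thm:prob_lif} and~\ref{thm:prob_companion} down to the rank distribution of a sum of random rank-one matrices, and then to exploit the closed form $Q_{q^d,b,k}(k)=\prod_{j=0}^{k-1}(1-q^{-d(b-j)})^2$, which comes from iterating $Q_{q^d,b,k}(k)=Q_{q^d,b,k-1}(k-1)\,U_{k-1}$ — the specialization of the recurrence of Theorem~\ref{thm:probrank} to $t=k$, where the terms $Q_{q^d,b,k}(k-1)$ and $Q_{q^d,b,k+1}(k-1)$ vanish. I abbreviate $U_j=(1-q^{-d(b-j)})^2$ for whatever degree $d$ is at hand; $U_j$ is nonincreasing in $j$ on $\{0,\dots,b\}$ with $U_b=0$. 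If $b<r$ then $Q_{q^d,b,r}(r)=0$ and all probabilities below collapse, so I assume $b\ge r$ (this is in any case what is needed for the strict inequality in part~2). Write $C_f^{\oplus t}$ for the direct sum of $t$ copies of $C_f$.

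For part~1, Theorem~\ref{thm:prob_companion} (legitimately applied since $r<t$) gives $\Pbr{b}{r}(C_f^{\oplus t})=\sum_{i=r}^{t}Q_{q^{\deg f},b,i}(t)=\Pr[\rank(A_t)\ge r]$, where $A_t=\sum_{i=1}^{t}u_iv_i^T=UV^T$ with $U,V\in\F^{b\times t}$ uniform over the field of $q^{\deg f}$ elements, while the left side at $t=r$ equals $Q_{q^{\deg f},b,r}(r)$. So it suffices to show $t\mapsto\Pr[\rank(A_t)\ge r]$ is nondecreasing. I would condition on $\rank(V)=k$ and factor $V=V'W$ with $V'$ of full column rank and $W\in\F^{k\times t}$ of full row rank; then $\rank(A_t)=\rank(UW^T)$, and since $W^T$ has full column rank the map $U\mapsto UW^T$ is a uniform-preserving surjection onto $\F^{b\times k}$, so conditionally $\rank(A_t)$ has the law of the rank of a uniform $b\times k$ matrix. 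Hence $\Pr[\rank(A_t)\ge r]=\mathbb{E}\,g(\rank V)$ with $g(k)=\Pr[\rank(\mathrm{Unif}(\F^{b\times k}))\ge r]$ nondecreasing in $k$; and $\rank V$ for a uniform $b\times t$ matrix is stochastically nondecreasing in $t$ (embed the columns of $V_t$ into $V_{t+1}$), which finishes the argument.

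Part~2 is then one line: $\Pbr{b}{r}(C_f^{\oplus r})=\prod_{j=0}^{r-1}(1-q^{-\deg f\,(b-j)})^2$, and each factor is strictly increasing in $\deg f$ because the exponents $b-j$ run over $1,\dots,b$ (all positive, since $b\ge r$); so $\deg f<\deg g$ forces the strict inequality. For part~3, if $f=g$ the matrix is $C_f^{\oplus r}$ and $\Pbr{b}{s}(C_f^{\oplus r})=\sum_{i=s}^{r}Q_{q^d,b,i}(r)$ is nonincreasing in $s$, hence least at $s=r$; and $t=0$ is the configuration $s=r$. So take $f\ne g$ and $t\ge1$, whence $s\le r-1$. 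By Theorem~\ref{thm:probprod}, $\Pbr{b}{s}(C_f^{\oplus s}\oplus C_g^{\oplus t})=\Pbr{b}{s}(C_f^{\oplus s})\,\Pbr{b}{s}(C_g^{\oplus t})=Q_{q^d,b,s}(s)\cdot\Pbr{b}{s}(C_g^{\oplus t})$. Since $C_g^{\oplus t}$ has only $t$ nontrivial invariant factors and, by the divisibility bound of~\citet{KaVi:2004}, the minimal generator of any projection of it has at most $t$ as well, being $s$-faithful to $C_g^{\oplus t}$ is (for $s\ge t$) the same event as being $t$-faithful to it; in every case $\Pbr{b}{s}(C_g^{\oplus t})\ge\Pbr{b}{t}(C_g^{\oplus t})=Q_{q^d,b,t}(t)$. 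Multiplying the two lower bounds and using $Q_{q^d,b,k}(k)=\prod_{j<k}U_j$, $U_j\ge U_{s+j}$, and $s+t=r$, a short telescoping yields $\Pbr{b}{s}(C_f^{\oplus s}\oplus C_g^{\oplus t})\ge\prod_{k=0}^{r-1}U_k=\Pbr{b}{r}(C_f^{\oplus r})$, the value at $s=r$, $t=0$; hence the minimum is attained there.

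I expect the main obstacle to be the faithfulness bookkeeping in part~3 when $s>t$: the summation formula of Theorem~\ref{thm:prob_companion} is valid only when the target does not exceed the number of companion blocks (otherwise its sum is empty), so one must return to Theorem~\ref{thm_pc_if} and the divisibility bound to see that the correct probability is $Q_{q^d,b,t}(t)$ rather than $0$. Once that is in place, part~1's monotonicity and the telescoping of the $U_j$-products are routine; part~1's conditioning reduction is the only genuinely probabilistic ingredient.
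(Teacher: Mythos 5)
Your proposal is correct, and for part 1 it takes a genuinely different route from the paper. The paper proves part 1 by induction on $r$ using the one-step recurrence of Theorem \ref{thm:probrank}: writing $P_t(r)=\Pbr{b}{r}(\dirsum_{i=1}^t C_f)$, it expands $P_{t+1}(r)=P_t(r)+Q_{m,b,r-1}(t)U_{r-1}-Q_{m,b,r}(t)D_r$, bounds $Q_{m,b,r}(t)\le P_t(r)$, checks $1-U_{r-1}-D_r>0$, and closes the induction with the base case $P_1(1)\le P_t(1)$ imported from Theorem 15 of \citet{HJS:2016}. Your argument instead goes back to Lemma \ref{lemma_outer}/Theorem \ref{thm:prob_companion} and proves the stronger statement that $t\mapsto\Pr[\rank(UV^T)\ge r]$ is nondecreasing, by conditioning on $\rank(V)=k$, observing that $UW^T$ is then uniform on $\F^{b\times k}$ (surjective linear images of uniform variables are uniform), and using the column-embedding coupling for stochastic monotonicity of $\rank(V_t)$ in $t$. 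This is cleaner and self-contained (no appeal to the external base case), and it yields monotonicity in $t$ rather than only the comparison with $t=r$; the paper's recurrence manipulation, on the other hand, stays entirely inside the $Q_{q,b,r}(t)$ calculus already set up for Theorem \ref{theorem:worst}. For parts 2 and 3 the paper merely asserts they are ``straightforward from Theorem \ref{thm:probrec}''; your closed form $Q_{q^d,b,k}(k)=\prod_{j=0}^{k-1}(1-q^{-d(b-j)})^2$, the reduction of $\Pbr{b}{s}(C_g^{\oplus t})$ to $Q_{q^d,b,t}(t)$ when $s\ge t$ (which indeed is the one point requiring care, and also follows from the $r>t$ branch of Theorem \ref{thm:probrec} with an empty remainder), and the telescoping comparison $\prod_{j<t}U_j\ge\prod_{j<t}U_{s+j}$ are exactly the details the authors omit. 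Your side remark that $b\ge r$ is needed for the strict inequality in part 2 is a fair observation about an implicit hypothesis in the lemma as stated.
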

\begin{proof}
Parts 2 and 3 follow from Theorem \ref{thm:probrec} and are straightforward.  Part 1, while intuitively clear, is more complicated.  For part 1, let $m = q^d$, where $d = \deg(f)$.  Let $P_t(r) = P_{q,b,r}(\dirsum_{i=1}^t C_f)$, and let $U_r, D_r$ be defined as in Theorem \ref{thm:probrank}.  Note that $U_r$ and $D_r$ denote the probability that given a matrix $A \in \P_f^{b \times b}$ of rank $r$ and random vectors $u,v \in \P_f^b$ that $\rank(A + uv^T) = r+1$ and $r-1$, respectively.  By Theorem 15 of \cite{HJS:2016}, $P_1(1) \leq P_t(1)$ for all $t \geq 1$.  Applying induction, for all $t \geq r \geq 2$,
\begin{eqnarray*}
P_{t+1}(r) &=& P_t(r) + Q_{m,b,t}(r-1)U_{r-1} - Q_{m,b,t}(r)D_r \\
&=& P_t(r) + (P_t(r-1) - P_t(r))U_{r-1} - Q_{m,b,t}(r)D_r \\
&\geq & P_t(r) + (P_t(r-1) - P_t(r))U_{r-1} - P_t(r)D_r \\
&=& P_t(r-1)U_{r-1} + P_t(r)(1 - U_{r-1} - D_r) \\
&\geq & P_t(r-1)U_{r-1} \\
&\geq & P_{r-1}(r-1)U_{r-1} \\
&=& P_r(r),
\end{eqnarray*}
because $Q_{m,b,t}(r) \leq P_t(r)$ and $
1 - U_{r-1} - D_r = \frac{2q^{b+r+1} - q^{2r+1} - q^{2r} + q^{r+1}}{q^{2b+2}} > 0$.
\end{proof}

Property 1 implies that, for a given irreducible and in the absence of other variation, the probability is increased if more
than $r$ copies of $C_f$ are included and property 3 implies
that the probability is increased if fewer than $r$ copies of
$C_f$ are included.  Property 2 implies that the probability is
larger if higher degree irreducibles are used.  Therefore the
probability that the $b \times b$ projection of 
$A \in \F^{n \times n}$ preserves the leading $r$ invariant factors is minimized
when $A$ is a direct sum of companion matrices for distinct irreducible polynomials, each repeated r times (to the extent possible) and of the smallest degrees, subject only to the requirement that the sum of the degrees is $n$. 
The construction begins by including $r$ copies of companion matrices for
each irreducible over $\F$ of degree one, then $r$ copies of companion matrices for each irreducible of degree two over
$\F$, and so on until the sum of the degrees is equal to $n$.  It may be the case that fewer than $r$
copies of the last irreducible fit and in this case the probability is minimized by including as many copies as do fit. Moreover, when filling in the last batch of companion
matrices of degree $m$ the dimension $n$ may not be reached
exactly and in this case several companion matrices can be
replaced by Jordan blocks to fit the dimension without changing
the probability.

Using this construction, a formula for $P_{q,b,r}(n)$ can
be derived.  Let $L_q(m)$ to be the number of monic irreducible polynomials of degree $m$ in $\P$ and define 
$L_{q,r}(n,m) = \min\left(L_q(m),\left\lfloor 
\frac{s}{rm} \right\rfloor\right),$ for 
$s = n - \sum_{d=1}^{m-1} r d L_q(d).$

\begin{figure}[t]
\begin{center}
\includegraphics[scale=0.46, trim=20 30 20 20]{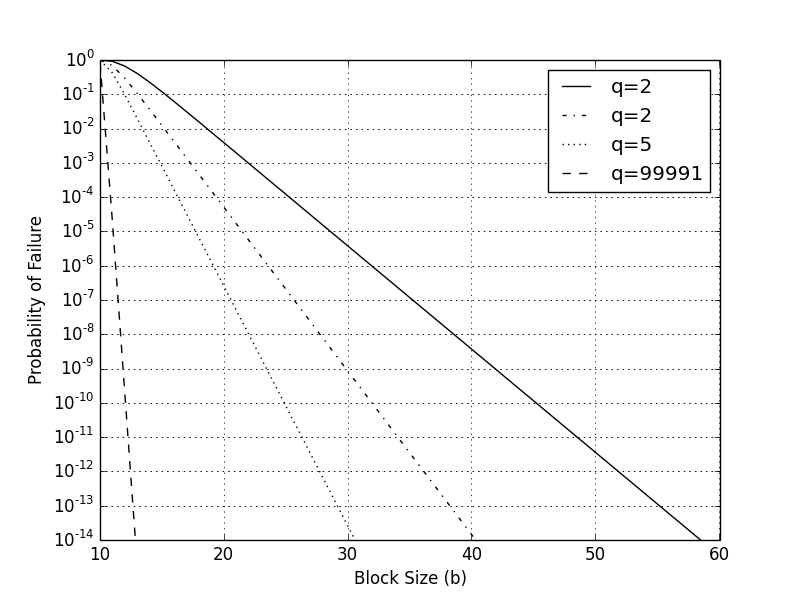}
\caption{Worst case probability of failure ($1-P_{q,b,10}(10^8)$) vs cardinality and block size}
\label{plot:worst-case}
\end{center}
\end{figure}

\begin{theorem}
\label{theorem:worst}
Define $m$ such that $\sum_{d=1}^{m-1} rd L_q(d) \le n$ and $n < \sum_{d=1}^m rd L_q(d)$.  Let  
$s = n - \sum_{d=1}^{m} r d L_{q}(n,d)$ and
$t = \left\lfloor \frac{s}{m} \right\rfloor $.  
Then the worst case probability that $r$ invariants are preserved in projection to blocksize $b$ on an $n\times n$ matrix over the field of cardinality $q$ satisfies 
\begin{eqnarray*}
&&P_{q,b,r}(n) = \left( \prod_{d=1}^m Q_{q^d,b,r}(r)^{L_{q,r}(n,d)} \right) Q_{q^m,b,t}(t).
\end{eqnarray*}
\end{theorem}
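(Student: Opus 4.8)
## Proof Proposal for Theorem \ref{theorem:worst}

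The plan is to combine the worst-case structural characterization established in the discussion following Lemma~\ref{lem:tech} with the multiplicativity over primary components (Theorem~\ref{thm:probprod}) and the companion-matrix formula (Theorem~\ref{thm:probrec}). First I would argue that the minimum of $\Pbr{b}{r}(A)$ over $A \in \F^{n\times n}$ is attained at a specific matrix $A^*$: by Theorem~\ref{thm:probprod} the probability factors over distinct irreducibles, and by parts 1, 2, and 3 of Lemma~\ref{lem:tech} each factor is minimized by taking exactly $r$ copies of the companion matrix $C_f$ for as many of the lowest-degree irreducibles $f$ as the dimension budget $n$ allows, using the smallest available degrees. Concretely, $A^* = \bigoplus_{d=1}^{m} \bigoplus_{f:\deg f = d} (C_f^{\oplus r})^{\text{(as many }f\text{ as fit)}} \oplus (\text{a remainder block})$. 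Here $m$ is defined so that the degree budget runs out while filling in degree-$m$ irreducibles: $\sum_{d=1}^{m-1} rd L_q(d) \le n < \sum_{d=1}^{m} rd L_q(d)$. The number of degree-$d$ irreducibles that get their full complement of $r$ copies is exactly $L_{q,r}(n,d)$ as defined just before the theorem — for $d < m$ this equals $L_q(d)$, and for $d = m$ it equals $\lfloor s/(rm)\rfloor$ with $s$ the leftover dimension.

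Next I would compute $\Pbr{b}{r}(A^*)$ explicitly. By Theorem~\ref{thm:probprod} it is the product over all the primary components appearing in $A^*$. For each irreducible $f$ of degree $d$ that appears with a full $r$ copies, Theorem~\ref{thm:probrec} (the $r \le t$ branch with $t = r$) gives $\Pbr{b}{r}(C_f^{\oplus r}) = Q_{q^d,b,r}(r)$, since the only term in the sum $\sum_{i=r}^{t} Q_{q^d,b,i}(t)$ with $t=r$ is $Q_{q^d,b,r}(r)$. There are $L_{q,r}(n,d)$ such factors for each $d$ from $1$ to $m$, contributing $\prod_{d=1}^{m} Q_{q^d,b,r}(r)^{L_{q,r}(n,d)}$. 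Then there is the remainder block: after placing $\sum_{d=1}^{m} rd L_{q,r}(n,d)$ dimensions we have $s$ dimensions left, which (by the construction) accommodates $t = \lfloor s/m \rfloor$ copies of a single degree-$m$ companion matrix $C_g$ (with a few companion matrices optionally swapped for Jordan blocks to hit $n$ exactly, which by Theorem~\ref{thm:prob_lif} does not change the probability). Since $t < r$ here, the $r \le t$ branch of Theorem~\ref{thm:probrec} does not apply; instead this block contributes $\Pbr{b}{t}(C_g^{\oplus t}) = Q_{q^m,b,t}(t)$ — but I need to be careful: we want the probability that $r$ invariants are preserved, and this block alone can preserve at most $t < r$ of them. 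The resolution is that the remainder block must preserve \emph{all} $t$ of its invariants (contributing its full complement toward the total $r$), so the relevant factor is indeed $\Pbr{b}{t}(C_g^{\oplus t}) = \sum_{i=t}^{t} Q_{q^m,b,i}(t) = Q_{q^m,b,t}(t)$, and the Schur-complement recursion in Theorem~\ref{thm:probrec} stitches these factors together into the product. Multiplying everything yields the claimed formula $P_{q,b,r}(n) = \left(\prod_{d=1}^{m} Q_{q^d,b,r}(r)^{L_{q,r}(n,d)}\right) Q_{q^m,b,t}(t)$.

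The main obstacle, and the step requiring the most care, is rigorously justifying that $A^*$ really is the global minimizer — that is, promoting the three local monotonicity statements of Lemma~\ref{lem:tech} into a genuine optimization argument over \emph{all} $n\times n$ matrices simultaneously. One must argue an exchange/greedy argument: starting from an arbitrary $A$, repeatedly apply moves (replace a primary component by one with more or fewer copies per Lemma~\ref{lem:tech}.1 and .3; replace a high-degree irreducible by a lower-degree one per Lemma~\ref{lem:tech}.2; merge distinct irreducibles of the same degree per .3) each of which does not increase the probability while preserving the dimension $n$, and show this process terminates at $A^*$. A subtlety is that parts 1 and 3 of the lemma point in opposite directions (more copies helps, fewer copies also helps), so the extremal configuration at the boundary — exactly $r$ copies — must be pinned down by combining them, and the leftover-dimension block genuinely has fewer than $r$ copies because the budget forces it. I would also need to verify the edge case where $s = 0$ (no remainder block, so $t = 0$ and $Q_{q^m,b,0}(0) = 1$ by Theorem~\ref{thm:probrank}, making that factor vanish harmlessly) and the case where a degree-$m$ irreducible gets between $1$ and $r-1$ copies versus the dimension being absorbed by swapping in Jordan blocks. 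These are routine once the greedy framework is set up, but the framework itself is where the real work lies.
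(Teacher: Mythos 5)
Your proposal is correct and follows essentially the same route as the paper: the paper's proof simply invokes the greedy worst-case construction justified by Lemma~\ref{lem:tech} (together with the multiplicativity of Theorem~\ref{thm:probprod}) and then evaluates the probability via Theorem~\ref{thm:probrec}. You actually supply more detail than the paper does, in particular on the exchange argument needed to promote the local monotonicity facts of Lemma~\ref{lem:tech} to a global minimum and on the edge cases for the remainder block.
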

\begin{proof}
Let $A \in \F^{n \times n}$ be the worst case matrix described by the construction above, following from Lemma \ref{lem:tech}.  Applying Theorem \ref{thm:probrec}, $P_{q,b,r}(n) = P_{q,b,r}(A) =  \left( \prod_{d=1}^m Q_{q^d,b,r}(r)^{L_{q,r}(n,d)} \right) Q_{q^m,b,t}(t)$.
\end{proof}

The plot in Figure \ref{plot:worst-case} shows the worst case probability of failure, 
($1-P_{q,b,10}(10^8)$), versus block size, $b$, for the field cardinalities 2,3,5, and 9991.
This shows that with a blocksize of a little over 30 the probability that the first
ten invariants are not correct is approximately one in a million, 
with substantially better probabilities for larger fields.

\section{Bound}

The worst case given in Theorem \ref{theorem:worst} is tight but not easily computed.  A simplified bound is offered in Theorem \ref{thm:bound}.

\begin{theorem}
\label{thm:bound}
For any $n > 0$, $q \geq 2$, and $b > r > 0$,
\[
P_{q,b,r}(n) > 1 - \frac{2r}{q^{b-r}-1}.
\]
\end{theorem}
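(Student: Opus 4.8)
The plan is to start from the exact worst-case formula of Theorem~\ref{theorem:worst},
\[
P_{q,b,r}(n) = \Bigl(\prod_{d=1}^m Q_{q^d,b,r}(r)^{L_{q,r}(n,d)}\Bigr)\,Q_{q^m,b,t}(t),
\]
and bound it by a union bound over its factors. Put $x = q^{-(b-r)}$, so $0 < x \le q^{-1} \le \tfrac12$ since $b>r$. The definition of $m$ forces $s_0 = n-\sum_{d<m}rd\,L_q(d) < rm\,L_q(m)$, hence $L_{q,r}(n,m)\le L_q(m)-1$ and the trailing exponent satisfies $t\le r-1<r$; since then $Q_{q^m,b,t}(t)\ge Q_{q^m,b,r}(r)$ (fewer factors, each in $(0,1)$), the trailing factor may be absorbed into the degree-$m$ block at the price of one extra copy, giving $P_{q,b,r}(n)\ge\prod_{d=1}^m Q_{q^d,b,r}(r)^{\ell_d}$ with every $\ell_d\le L_q(d)$. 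Applying $1-\prod_j p_j\le\sum_j(1-p_j)$ then yields
\[
1-P_{q,b,r}(n)\ \le\ \sum_{d=1}^m L_q(d)\bigl(1-Q_{q^d,b,r}(r)\bigr).
\]

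Next I would evaluate the two ingredients. Specializing the recurrence of Theorem~\ref{thm:probrank} to $t=r$ annihilates the ``stay'' and ``down'' terms (the rank cannot exceed the number of summands), and a short induction on $r$ gives $Q_{q^d,b,r}(r)=\prod_{i=0}^{r-1}\bigl(1-q^{-d(b-i)}\bigr)^2$; equivalently, this is the probability that two independent uniform $b\times r$ matrices over $\F_{q^d}$ are both of full column rank, which is precisely what $\rank\bigl(\sum_{i=1}^r u_iv_i^T\bigr)=r$ requires. Using $(1-a)^2\ge 1-2a$ and $\prod_i(1-a_i)\ge 1-\sum_i a_i$ --- legitimate because each $2q^{-d(b-i)}\le 2q^{-2}<1$ --- this gives $1-Q_{q^d,b,r}(r)\le 2\sum_{i=0}^{r-1}q^{-d(b-i)}$. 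For the multiplicities I would use the elementary identity $\sum_{e\mid d}e\,L_q(e)=q^d$, which yields $L_q(1)=q$ and $d\,L_q(d)\le q^d-q$ for $d\ge 2$. Multiplying and simplifying, the degree-$d$ term is $<\tfrac2d x^d$ for $d\ge2$, while the degree-one term is $\le q\cdot 2\sum_{i=0}^{r-1}q^{i-b}=\tfrac{2(1-q^{-r})}{1-q^{-1}}x=2x+\tfrac{2(1-q^{1-r})}{q-1}x$. Summing over $d$,
\[
1-P_{q,b,r}(n)\ <\ \frac{2\bigl(1-q^{1-r}\bigr)}{q-1}\,x\ +\ \sum_{d\ge2}\frac{2}{d}\,x^d\ =\ \frac{2\bigl(1-q^{1-r}\bigr)}{q-1}\,x\ -\ 2\ln(1-x).
\]

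Finally I would close with two elementary estimates. Comparing power series, $\frac{2x}{1-x}+2\ln(1-x)=2\sum_{k\ge2}\bigl(1-\tfrac1k\bigr)x^k>0$, so $-2\ln(1-x)<\frac{2x}{1-x}$; and $\frac{1-q^{1-r}}{q-1}<\frac{1}{q-1}\le 1\le r-1$ for $r\ge2$ (both sides vanish at $r=1$), so $\frac{2(1-q^{1-r})}{q-1}x\le\frac{2(r-1)x}{1-x}$. Adding these bounds gives $1-P_{q,b,r}(n)<\frac{2rx}{1-x}=\frac{2r}{q^{b-r}-1}$, which is the assertion. I expect the only delicate point to be the degree-one contribution over a small field --- for instance $q=2$, $r=1$, where the claimed inequality has essentially no margin (its leading terms $2x$ coincide) --- so there one must retain $L_q(1)=q$ and the factor $\tfrac{q}{q-1}$ exactly rather than bounding them crudely; every contribution from degree $\ge2$ is, by contrast, comfortably dominated by the logarithm.
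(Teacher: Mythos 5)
Your proof is correct and follows essentially the same route as the paper's: both start from the worst-case product of Theorem~\ref{theorem:worst} with $Q_{q^d,b,r}(r)=\prod_{i=0}^{r-1}\bigl(1-q^{-d(b-i)}\bigr)^2$, bound the failure probability degree by degree using $L_q(d)\le q^d/d$, and sum a series in $x=q^{-(b-r)}$ to reach $2r/(q^{b-r}-1)$ (the paper simply collapses each degree's contribution to $2rq^{-d(b-r)}$ via Bernoulli and a geometric series instead of tracking the $-2\ln(1-x)$ tail, and is less careful than you about the trailing factor $Q_{q^m,b,t}(t)$ and the $d=1$ term). One cosmetic slip: in your summed display the middle expression omits the $d=1$ contribution $2x$, since $\sum_{d\ge2}\tfrac{2}{d}x^d=-2\ln(1-x)-2x$; it should read $\tfrac{2(1-q^{1-r})}{q-1}x+\sum_{d\ge1}\tfrac{2}{d}x^d$, but the right-hand side $\tfrac{2(1-q^{1-r})}{q-1}x-2\ln(1-x)$ and every step after it are correct.
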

\begin{proof}
\begin{eqnarray*}
P_{q,b,r}(n) &>& \prod_{d=1}^\infty \prod_{i=0}^{r-1} \left( 1 - \frac{1}{q^{d(b-i)}} \right)^{2L_{q,r}(d)} 
\geq \prod_{d=1}^\infty \left( 1 - \frac{1}{q^{d(b-r+1)}} \right)^{2rq^d} \\
&>& \prod_{d=1}^\infty \left( 1 - \frac{2r}{q^{d(b-r)}} \right) 
> 1 - \sum_{d=1}^\infty \frac{2r}{q^{d(b-r)}} \\
&=& 1 - \frac{2r}{q^{b-r} - 1}
\end{eqnarray*}
\end{proof}

Theorem \ref{thm:bound} implies that taking $b = \left\lceil\log_q\left(\frac{2r}{1-p}+1\right)+r \right\rceil$ will yield $P_{q,b,r}(n) > p$.  As discussed in Section \ref{sec:wc}, there exists a matrix $A \in \F^{n \times n}$ such that $xI-A$ has at most $r$ nontrivial invariant factors and $P_{q,b,r}(n) = \Pbr{b}{r}(A)$.  Thus the bounds given in \citep{Brent:2003} for matrices with $r$ or fewer nontrivial invariants 
apply as well to $\Pbr{b}{r}(A)$.

\begin{theorem}
\label{theorem:brent}
For any $n > 0$, $q \geq 2$, and $b > r > 0$,
the probability that $r$ invariants are preserved in projection to blocksize $b$ on an $n\times n$ matrix over the field of cardinality $q$ satisfies 
\[
P_{q,b,r}(n) \geq \begin{cases}
\frac{1}{64} & \text{if $b = r+1$ and $q=2$,} \\
\left( 1 - \frac{3}{2^{b-r}} \right)^2 \geq \frac{1}{16} & \text{if $b \geq r+2$ and $q=2$,} \\
\left( 1 - \frac{2}{q^{b-r}} \right)^2 \geq \frac{1}{9} & \text{if $b \geq r+1$ and $q > 2$.}
\end{cases}
\]
\end{theorem}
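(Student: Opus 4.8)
The plan is to reduce Theorem~\ref{theorem:brent} to Theorem~\ref{thm:bound} together with the fact, established at the end of Section~\ref{sec:wc}, that the worst case matrix $A$ attaining $P_{q,b,r}(n)$ has at most $r$ nontrivial invariant factors of $xI-A$. Because of this structural fact, the one-sided analysis of \citet{Brent:2003}, which assumes the minimal generator of $\{A^iY\}_i$ has at most $m=r$ nontrivial invariant factors, applies verbatim to $\Pbr{b}{r}(A)$: the ``pathological'' Coppersmith case is exactly excluded. So the first step is to invoke Brent's exact formula / bound for the probability that a block projection with block size $b$ is $r$-faithful in this restricted regime, and to transcribe the three cases ($q=2$, $b=r+1$; $q=2$, $b\ge r+2$; $q>2$, $b\ge r+1$) directly.

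The second step is to verify the numerical simplifications claimed on the right-hand sides. For $q>2$, $b\ge r+1$: the factor $1-2/q^{b-r}$ is smallest when $q^{b-r}$ is smallest, i.e. $q=3$, $b-r=1$, giving $1-2/3=1/3$ and hence $(1/3)^2 = 1/9$. For $q=2$, $b\ge r+2$: $1-3/2^{b-r}$ is smallest at $b-r=2$, giving $1-3/4 = 1/4$, hence $(1/4)^2 = 1/16$. For $q=2$, $b=r+1$: here $2^{b-r}=2$ and the naive $(1-3/2)^2$ is useless, so one needs Brent's sharper exact expression for this boundary case; the claim is that it still yields at least $1/64$. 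The only mild obstacle is pinning down exactly which statement in \citet{Brent:2003} gives the $b=r+1$, $q=2$ value — one should cite the exact formula there (a finite product over the invariant-factor degrees) and specialize it to the worst configuration, namely $r$ companion matrices $C_f$ for $f=x$ (degree one), for which the probability is a product of Gaussian-binomial-type factors; a direct estimate of that product from below by $1/64$ finishes it.

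The third step is simply to observe that all three bounds are monotone in the right direction as $n$, and the number and degrees of irreducibles, vary, so that the worst-case minimum over $A\in\F^{n\times n}$ is still bounded below by these constants; this is immediate from Lemma~\ref{lem:tech} (higher-degree irreducibles and extra copies only help) and from Theorem~\ref{thm:probrec}, which shows $\Pbr{b}{r}$ is a product of per-component factors each of the analyzed form. I expect the main obstacle to be purely bibliographic rather than mathematical: matching Brent's notation ($b\ge m$, his $m$ playing the role of our $r$, and his one-sided setup) to the two-sided quantity $\Pbr{b}{r}(A)$, and confirming that his hypotheses are met precisely because the worst case has $\le r$ invariant factors. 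Once that identification is made, the three displayed inequalities are either quoted or follow from a one-line monotonicity check.
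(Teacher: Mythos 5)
Your overall route is the same as the paper's: reduce to the fact (from Theorem~\ref{theorem:worst} and the construction in Section~\ref{sec:wc}) that the minimum over $A$ is attained by a matrix with at most $r$ nontrivial invariant factors, so that the hypotheses of \citet{Brent:2003} are met, and then quote those bounds. Your numerical checks ($(1/4)^2=1/16$ at $b-r=2$, $q=2$; $(1/3)^2=1/9$ at $b-r=1$, $q=3$) are fine.

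There is one genuine gap, which you half-notice and then dismiss as ``bibliographic'': Brent's analysis does \emph{not} apply verbatim, because it is one-sided (it bounds the probability that the minimal generator of $\{XA^iY\}_i$ is faithful to that of $\{A^iY\}_i$ for random $X$, with $Y$ already fixed), whereas $\Pbr{b}{r}(A)$ concerns a two-sided projection with both $U$ and $V$ random and faithfulness measured against $xI-A$ itself. The resolution, which the paper states explicitly, is that Brent's bound must be applied twice --- once for the right projection $V$ (faithfulness of the generator of $\{A^iV\}_i$ to $xI-A$) and once for the left projection $U$ relative to that sequence --- so the two-sided probability is bounded below by the \emph{square} of Brent's one-sided bound. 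This is exactly why every case in the displayed statement carries a square; in particular the $b=r+1$, $q=2$ constant is simply $(1/8)^2=1/64$, the square of Brent's one-sided $1/8$ for that boundary case, and does not require the separate product estimate you propose. Without the squaring step your argument would produce the unsquared one-sided bounds, which is not the theorem as stated; with it, your proof matches the paper's.
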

\begin{proof}
By Theorem \ref{theorem:worst} and its proof, there exists a matrix $A \in \F^{n \times n}$ such that $A$ has at most $r$ invariant factors and $P_{q,b,r}(A) = P_{q,b,r}(n)$.  Therefore, the bound (squared) given in 
\cite{Brent:2003} applies to $P_{q,b,r}(n)$.
\end{proof}

Theorem \ref{theorem:brent} implies that taking $b = \left\lceil \log_2 \left( \frac{3}{1 - \sqrt{p}} \right) + r \right\rceil$ when $q = 2$, and $b = \left\lceil \log_q \left( \frac{2}{1 - \sqrt{p}} \right) + r \right\rceil$ when $q > 2$ will yield $P_{q,b,r}(n) \geq p$.

\section{Post hoc analysis}
\label{sec:post-hoc}
In the block Wiedemann algorithm applied with blocksize $b$ to matrix $A \in \F^{n\times n}$, we compute candidates for the leading $b$ invariant factors.  
The computed invariants list divides the true leading invariant list componentwise.  
In the preceding sections we have been concerned with the a priori probability of getting the true leading invariants.  
Here we consider what we can know post hoc.  
For example, if the degrees of the computed invariants total $n$, then with certainty, we have computed the true invariants.
At the other extreme, 
if $n$ is odd and all irreducibles in the factors we found are of even degree, then with certainty we have missed a factor in the first invariant, the minimal polynomial.  
There is a great variety of cases between these extremes.  For example if the $b$ computed invariants are a nontrivial minimal polynomial followed by $x$'s, as often occurs for low rank matrices, the chance of error is low, since any error must include a missing factor in at least one of the leading two invariants.  The probability of this diminishes rapidly as $b$ grows.
For a more problematic example, 
suppose $b < n/2$ and the computed invariants are $x^2-x, x^2-x, \ldots, x^2-x, x$.  
In other words, $x-1$ is a computed factor $b-1$ times and $x$ is computed $b$ times. 
Since each invariant divides the preceding, the possibilities of location of the first error are limited to the first invariant and the $b$-th.  
We may have missed an extra factor in the first position, but with very low probability (depending on $b$).  
Also we may have missed a factor $x-1$ in the $b$-th invariant, a much more likely scenario.  
In this case, with high confidence there are no irreducibles occurring other than $x-1$ and $x$ and no Jordan blocks for those two factors of exponent greater than 1.  
On the other hand, confidence in knowing the rank of the matrix is lower, as that depends on believing that $x-1$ does not divide the $b$-th invariant.

Many of these observations can be derived using the a priori bound, $P_{q,b,r}(n)$. 
The chance of incorrectly computing the minimal polynomial is small, even when computation on a large matrix over a small field using a modest block size, e.g., $P_{2, 10, 1}(10^9) \approx 0.996$.  
The a priori bound also shows that computing the second and third invariants correctly is also very likely, $P_{2, 10, 2}(10^9) \approx 0.988$ and $P_{2, 10, 3}(10^9) \approx 0.973$.
The confidence given by the a priori bound decreases dramatically as $r$ approaches $b$.
Reusing the previous example, suppose the computed factors are $x^2-x, \ldots, x^2-x, x$.  Using the a priori bound, the probability that the $b$-th invariant factors was correctly computed is at least $P_{2,10,10}(10^9) \approx 0.00004$.  

Confidence in computed results are improved significantly by applying post hoc analysis.
The remainder of this section describes a practical framework for doing post hoc analysis.

To develop these observations into post hoc bounds, we begin with some definitions.
Let $P_{q,b,r}(G, A)$ denote the probability that block Wiedemann would produce the first $r$ invariant factors of $G$ given $A$, where there exists $U,V \in \F^b$ such that $G$ minimally generates $\{U^TA^iV\}_i$.  Let $P_{q,b,r}(G, n)$ be a lower bound on the probability that $G$ was computed $r$-faithfully to $xI-A$ where $A \in \F^{n \times n}$ is unknown.  
Let $g_i$ and $a_i$ denote the $i$-th largest invariant factor of $G$ and $xI-A$, respectively.
Let $\mathcal{A}$ be the set of all $A \in \F^{n \times n}$ such that $g_i | a_i$ and $G$ is not $r$-faithful to $xI-A$.  Then, $
P_{q,b,r}(G, n) = 1 - \max(\{P_{q,b,r}(G,A) ~|~ A \in \mathcal{A}\}).
$  
Note that $\mathcal{A} = \emptyset$ when $\deg(\det(G)) = n$ and therefore $P_{q,b,r}(G, n) = 1$.



In practice, computing $P_{q,b,r}(G, n)$ is prohibitively time consuming because of the cardinality of $\mathcal{A}$.  Therefore, the rest of this section is dedicated to describing a lower bound on $P_{q,b,r}(G, n)$.  We do this by locating the most likely error.  The first lemma observes that each primary component can be considered separately.

\begin{lemma}
\label{lem:post-hoc-if}
Let $f_i \in \P$ be distinct irreducible polynomials.  
Let $A = \dirsum_i A_i$ where $f_i^{e_i} = \minpoly(A_i)$.  
Let $\snf(G) = \prod_i \snf(G_i)$, where $G_i$ minimally generates $\{U_i^TA_i^kV_i\}_k$.  Then, \[
P_{q,b,r}(G, A) = \prod_{i=1}^k P_{q,b,r}(G_i, A_i) \leq \max_i P_{q,b,r}(G_i, A_i).
\]
\end{lemma}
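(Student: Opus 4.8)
The plan is to reduce the post hoc probability to a product over primary components, exactly mirroring the structure of Theorem \ref{thm:probprod}, and then bound the product by its largest factor. First I would recall the basic setup: since the $f_i$ are distinct irreducibles and $f_i^{e_i} = \minpoly(A_i)$, the minimal polynomials of the $A_i$ are pairwise relatively prime, so by the Chinese remainder theorem the Smith normal form of a minimal generator splits as $\snf(G) = \prod_i \snf(G_i)$, which is given in the hypothesis. The key observation is that $G$ is $r$-faithful to $xI-A$ if and only if, for the first $r$ invariant factors, the $f_i$-primary part of $g_j$ equals the $f_i$-primary part of $a_j$ for every $i$ and every $j \le r$ — that is, $G$ is $r$-faithful to $A$ exactly when $G_i$ is $r$-faithful to $A_i$ for all $i$ simultaneously. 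This is the same bookkeeping used in the proof of Theorem \ref{thm:probprod}: invariant factors factor over primary components, and the $j$-th invariant factor of $xI-A$ is the product over $i$ of the $j$-th invariant factor of $xI-A_i$.

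Next I would use independence. Writing $S = \{U^TA^iV\}_i$ with $U,V$ uniformly random and conforming block decompositions $U = (U_1^T \mid \cdots \mid U_k^T)^T$, $V = (V_1^T \mid \cdots \mid V_k^T)^T$, the blocks $(U_i,V_i)$ are mutually independent and uniform. The event ``$G$ agrees with $A$ on the first $r$ invariants'' is the intersection of the $k$ events ``$G_i$ agrees with $A_i$ on the first $r$ invariants,'' and these depend on disjoint independent blocks, so the probability factors:
\[
P_{q,b,r}(G,A) = \prod_{i=1}^k P_{q,b,r}(G_i,A_i).
\]
Since each factor $P_{q,b,r}(G_i,A_i)$ is a probability lying in $[0,1]$, a product of numbers in $[0,1]$ is at most any one of them, hence at most the maximum:
\[
\prod_{i=1}^k P_{q,b,r}(G_i,A_i) \le \max_i P_{q,b,r}(G_i,A_i).
\]

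The main obstacle, and the step worth writing carefully, is the equivalence ``$G$ is $r$-faithful to $A$ $\iff$ each $G_i$ is $r$-faithful to $A_i$,'' because $r$-faithfulness is a statement about the first $r$ invariant factors as whole polynomials, not about individual primary parts, and one must check that truncating to the top $r$ commutes with the primary decomposition. This is true because the ordering of invariant factors is determined primary-component-wise (the exponent sequence of $f_i$ in $a_1 \mid a_2 \mid \cdots$ is nonincreasing for each $i$), so agreement of $G$ with $A$ in positions $1,\ldots,r$ is precisely agreement of every $G_i$ with $A_i$ in positions $1,\ldots,r$. Here one also uses the hypothesis that a $U,V$ realizing $G$ as a minimal generator exists, so that the restriction to block $i$ realizes $G_i$ as a minimal generator of $\{U_i^TA_i^kV_i\}_k$ and $P_{q,b,r}(G_i,A_i)$ is well defined. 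Everything else is routine.
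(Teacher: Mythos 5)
Your proposal is correct and follows essentially the same route as the paper: the paper's proof simply cites Theorem~\ref{thm:probprod} for the equality (the same primary-decomposition and block-independence argument you spell out) and notes the inequality is immediate since each factor lies in $(0,1)$. One small caveat: the event underlying $P_{q,b,r}(G,A)$ is that block Wiedemann produces the specific leading invariant factors of $G$ (which for the candidate matrices $A$ considered in the post hoc analysis need \emph{not} be faithful), not that the output is $r$-faithful; your decomposition-plus-independence argument applies verbatim to that event, so this is a matter of naming rather than a gap.
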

\begin{proof}
The equality follows from Theorem \ref{thm:probprod}.  
The inequality is trivially true, because $0 < P_{q,b,r}(G_i, A_i) < 1$.
\end{proof}

In the next lemma, we give a bound on $P_{q,b,r}(G, A)$ for a single primary component and a candidate matrix $A$ where $G$ is $i$-faithful but not $(i+1)$-faithful.  The bound is then used to compute a bound on the probability that the first error occurred in $G$ at index $i+1$ for unknown input.

\begin{lemma}
\label{lem:post-hoc-prob1}
Let $S = \{U^TA^iV\}_i$, where $A \in \F^{n \times n}$, and $U, V \in \F^{n \times b}$ are uniformly random.  
Let $G \in \P^{b \times b}$ minimally generate $S$.  
Let $\minpoly(A) = f^e$, where $f \in \P$ is an irreducible polynomial of degree $d$.  
Let $G$ be $i$-faithful, but not $(i+1)$-faithful, where $i < r$.  
Let $j$ and $k$ be the first and last indices, respectively, such that $a_j = a_{i+1} = a_k $.  
Then, \[
P_{q,b,r}(G, A) \leq Q_{q^d, b, j-1}(j-1) Q_{q^d, b - j + 1, i - j + 1}(k - j + 1).
\]
\end{lemma}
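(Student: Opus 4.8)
The plan is to realize the stated product as the probability of an event strictly larger than the one defining $P_{q,b,r}(G,A)$, retaining only two pieces of information about a rerun of block Wiedemann on $A$: that its leading $j-1$ invariant factors come out correct, and that exactly $i-j+1$ of the $k-j+1$ copies of $a_{i+1}=f^{e_{i+1}}$ in $xI-A$ are recovered. First I would extract the combinatorial content of the hypotheses. Since $G$ is $i$-faithful, $g_\ell=a_\ell$ for $\ell\le i$; since $G$ is not $(i+1)$-faithful and $g_{i+1}\mid a_{i+1}$, the factor $g_{i+1}$ is a proper divisor of $f^{e_{i+1}}$. Using that $(g_\ell)$ and $(a_\ell)$ are divisibility chains, that $j\le i+1\le k$, and that $a_{j-1}\neq a_{i+1}\neq a_{k+1}$, any minimal generator $G'$ of a projection of $A$ with $g'_\ell=g_\ell$ for $\ell\le r$ must satisfy $g'_\ell=a_\ell$ for $\ell\le j-1$ and must have $g'_\ell=f^{e_{i+1}}$ for exactly $i-j+1$ indices $\ell$ (the indices $j,\dots,i$; none when $i=j-1$). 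Hence $P_{q,b,r}(G,A)\le \Pr[E_1]\Pr[E_2\mid E_1]$, where $E_1$ is the event ``the leading $j-1$ invariant factors of the rerun are correct'' and $E_2$ is ``exactly $i-j+1$ copies of $f^{e_{i+1}}$ appear in the rerun.''

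Next I would evaluate $\Pr[E_1]=\Pbr{b}{j-1}(A)$. Because $\minpoly(A)=f^e$, every invariant factor of $xI-A$ is a power of $f$, and because $a_{j-1}\neq a_j$ the index $j-1$ sits at a boundary between two blocks of equal invariant factors. Unrolling the recursion of Theorem~\ref{thm:probrec} through the leading blocks, replacing each Jordan block by a companion matrix via Theorem~\ref{thm:prob_lif} and each fully preserved leading block by a factor $Q_{q^d,b',t'}(t')$ on a successively shrunken projection size $b'$ (Theorem~\ref{thm:prob_companion}), $\Pbr{b}{j-1}(A)$ becomes a product of such factors. Writing each through the product form implicit in Theorem~\ref{thm:probrank}, $Q_{q^d,b',t'}(t')=\prod_{s=0}^{t'-1}\bigl(1-q^{-d(b'-s)}\bigr)^2$, the index ranges telescope exactly over $0,\dots,j-2$ and the whole product collapses to $Q_{q^d,b,j-1}(j-1)$; equivalently one inducts on $j-1$, the base case being a single block and the step merging one leading $Q$-factor into the inductive value by this same identity.

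Finally I would identify $\Pr[E_2\mid E_1]$ with $Q_{q^d,\,b-j+1,\,i-j+1}(k-j+1)$. Conditioning on $E_1$ is exactly conditioning on each of the leading blocks being fully preserved, which is precisely the event on which the Schur-complement step of Lemma~\ref{lem_prob_rec}, as assembled into Theorem~\ref{thm:probrec}, conditions; it leaves a uniformly random $(b-j+1)\times(b-j+1)$ projection of the tail of $A$ whose leading repeated block is $k-j+1$ copies of $C_f$ (Theorem~\ref{thm:prob_lif}). By Lemma~\ref{lemma_outer} together with the identification used in the proof of Theorem~\ref{thm:prob_companion}, the number of surviving copies of $f^{e_{i+1}}$ equals the rank of a sum of $k-j+1$ independent uniform outer products over $\F_{q^d}^{\,b-j+1}$, so the probability it equals $i-j+1$ is $Q_{q^d,b-j+1,i-j+1}(k-j+1)$ by the definition of $Q$. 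Multiplying the two factors gives the claim. I expect the bookkeeping of this last step to be the main obstacle: one must verify that ``the leading $j-1$ invariant factors are correct'' is literally the event the Schur-complement recursion conditions on, so that after the reduction the remaining randomness is a genuinely unconditioned projection of the tail, and one must separately cover the degenerate case $i=j-1$, in which $E_2$ asserts that this tail's leading block projects to rank $0$.
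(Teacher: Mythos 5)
Your proof is correct and takes essentially the same route as the paper's: the paper likewise bounds $P_{q,b,r}(G,A)$ by the product of the probability that $G$ is $(j-1)$-faithful with the probability that exactly $i-j+1$ of the $k-j+1$ copies of $a_{i+1}$ survive the residual $(b-j+1)\times(b-j+1)$ projection, discarding the probability $p\le 1$ of reproducing the remaining invariants. You supply more justification than the paper does --- notably the telescoping identity $Q_{q^d,b,t_1}(t_1)\,Q_{q^d,b-t_1,t_2}(t_2)=Q_{q^d,b,t_1+t_2}(t_1+t_2)$ that turns the block-by-block product into the single factor $Q_{q^d,b,j-1}(j-1)$, which the paper asserts only ``by definition.''
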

\begin{proof}
By definition, the $Q_{q^d, b, j-1}(j-1)$ is the probability that $G$ is $(j-1)$-faithful to $xI-A$.  Following from Theorem \ref{thm:probrec}, $Q_{q^d, b - j + 1, i - j + 1}(k - j + 1)$ is the probability that exactly $i - j + 1$ copies of $a_{i+1}$ were preserved by random $(b - j + 1) \times (b - j + 1)$ projection.  
The probability, $p$, that the remaining invariant factors of $G$ would be computed is left out to simplify the calculation and reduce the search space.  
Therefore, \begin{eqnarray*}
P_{q,b,r}(G, A) &=& Q_{q^d, b, j-1}(j-1) Q_{q^d, b - j + 1, i - j + 1}(k - j + 1) p \\
&\leq & Q_{q^d, b, j-1}(j-1) Q_{q^d, b - j + 1, i - j + 1}(k - j + 1),
\end{eqnarray*}
because $0 < p \leq 1$.
\end{proof}

Let $f \in \P$ be an irreducible polynomial of degree $d$, and let $\det(G) = f^e$.  Let $M_{q,n}(G, i)$ denote the probability that the first error in $G$ occurs at the $(i+1)$-st largest invariant factor.  Let $\mathcal{A}_i$ denote the set of all matrices $A \in \F^{n \times n}$ such that $\minpoly(A) = f^k$.
Then, $M_{q,n}(G, i) = \max_{A \in {\mathcal{A}_i}} P_{q,b,r}(G, A)$  where $g_j = a_j$ for all $j \leq i$, $g_{i+1} \neq a_{i+1}$, and $i < r$.
The following lemma gives an upper bound on $M_{q,n}(G, i)$ by applying Lemma \ref{lem:post-hoc-prob1} to all matrices in $\mathcal{A}_i$.

\begin{lemma}
\label{lem:post-hoc-boundi}
Let $f \in \P$ be an irreducible polynomial of degree $d$.  Let $G \in \P^{b \times b}$, where the $i$-th largest invariant factor of $G$ is $g_i = f^{e_i}$.  Let $m = (n - \deg(\det(G)))/d$.  Then, 
\[
	M_{q,n}(G, i) \leq
	\begin{cases}
	0, & g_i = g_{i+1} \\
	\max(X), & g_i / g_{i+1} = f, \\
	\max(X ~ \bigcup ~ Y), & otherwise,
	\end{cases}
\]
where $X = \{Q_{q^d, b, k}(k) Q_{q^d, b - k, i - k}(i - k + t) ~|~ t = 1...m\}$,
$Y = \{Q_{q^d, b, i}(i) Q_{q^d, b - i, 0}(t) ~|~ t = 1...m\}$, and $k$ is the least index such that $g_{k+1} = g_i$.
\end{lemma}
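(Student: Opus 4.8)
\textbf{Proof proposal for Lemma~\ref{lem:post-hoc-boundi}.}

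The plan is to enumerate, for a single primary component, all candidate matrices $A \in \mathcal{A}_i$ — those for which $G$ agrees with $xI-A$ in the first $i$ invariant factors but fails at index $i+1$ — and to maximize the bound from Lemma~\ref{lem:post-hoc-prob1} over them. The first observation is the trivial case: if $g_i = g_{i+1}$, then any $A$ with $g_j \mid a_j$ for all $j$ and $a_j = g_j$ for $j \le i$ is forced to have $a_{i+1} = g_{i+1}$ as well, since $g_{i+1} \mid a_{i+1} \mid a_i = g_i = g_{i+1}$; hence $\mathcal{A}_i = \emptyset$ and $M_{q,n}(G,i) = 0$. So assume $g_i \neq g_{i+1}$, and let $k$ be the least index with $g_{k+1} = g_i$ (so $g_k \ne g_i$), meaning $g_{k+1} = \cdots = g_i$ is the maximal run of repeated invariants ending at position $i$.

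Next I would parametrize the possible errors. Since $G$ is $i$-faithful but not $(i+1)$-faithful and each invariant divides the previous one, the discrepancy at position $i+1$ means $a_{i+1}$ is a proper multiple of $g_{i+1}$; because $g$'s and $a$'s are all powers of the single irreducible $f$, we have $a_{i+1} = f^{e_{i+1} + s}$ for some $s \ge 1$. Moreover $a_{i+1} \mid a_i = g_i = f^{e_i}$, so the "true" value $a_{i+1}$ sits between $g_{i+1}$ and $g_i$ in the divisibility order. There are two qualitatively different sub-cases. If $g_i/g_{i+1} = f$ (the gap is a single power of $f$), then the only room for $a_{i+1}$ is $a_{i+1} = g_i$; applying Lemma~\ref{lem:post-hoc-prob1} with this $A$, the first repeated block of $a_{i+1} = g_i$ runs from index $k+1$ through some index $\ell$, and the count of copies of $a_{i+1}$ actually preserved is $i - k$ out of $\ell - k$; the total number of extra copies available is bounded by the leftover degree budget $m = (n-\deg\det G)/d$, so $\ell - k = i - k + t$ for some $t \in \{1,\dots,m\}$. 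Feeding $j = k+1$ into Lemma~\ref{lem:post-hoc-prob1} gives exactly the terms $Q_{q^d,b,k}(k)\,Q_{q^d,b-k,i-k}(i-k+t)$ comprising the set $X$, and maximizing over $t$ gives $\max(X)$. If instead $g_i/g_{i+1}$ is a higher power of $f$, then besides the case $a_{i+1} = g_i$ (again contributing $X$) there is the case where $a_{i+1}$ is strictly between $g_{i+1}$ and $g_i$, so $a_{i+1} \ne a_i$, meaning the true invariant $a_{i+1}$ is a fresh value not forced by faithfulness of $g_i$; here $j = k = i$ in the language of Lemma~\ref{lem:post-hoc-prob1} is replaced by the situation where $G$ is $i$-faithful and zero copies of $a_{i+1}$ were preserved among $t$ total, yielding $Q_{q^d,b,i}(i)\,Q_{q^d,b-i,0}(t)$, i.e.\ the set $Y$. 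Taking the maximum over $X \cup Y$ covers both error patterns.

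The step I expect to be the main obstacle is the careful bookkeeping that shows these two (or one) families exhaust $\mathcal{A}_i$, and in particular that the degree-budget parameter $t$ correctly ranges only over $1,\dots,m$. One must argue that any $A \in \mathcal{A}_i$ is, after restricting to the primary component and invoking Theorems~\ref{thm:prob_lif} and \ref{thm:probrec}, equivalent for probability purposes to a direct sum of companion matrices whose leading-invariant multiplicity is determined by the run structure of $a_1,\dots$; the extra copies of $a_{i+1}$ beyond what $G$ preserved, together with any Jordan-block exponents, consume total degree exactly $n - \deg\det(G)$ minus what is already accounted for, which caps the number of surplus copies at $m$. One also needs the monotonicity facts from Lemma~\ref{lem:tech} to be sure that adding \emph{more} than the minimal number of surplus copies does not decrease the probability below what the listed $t$-values capture — equivalently, that the maximum over all legal $A$ is attained within the finite set $X \cup Y$ rather than at some configuration with a different irreducible or a different run pattern. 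Given Lemma~\ref{lem:post-hoc-if} (which already reduced to a single primary component) and Lemma~\ref{lem:post-hoc-prob1} (which handles a fixed $A$), the remaining work is exactly this combinatorial enumeration, and once it is in place the stated three-way formula follows by direct substitution.
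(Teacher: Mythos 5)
Your proposal follows the same route as the paper's proof: the divisibility argument ($g_{i+1}\mid a_{i+1}\mid a_i=g_i=g_{i+1}$) for the trivial case, then applying Lemma~\ref{lem:post-hoc-prob1} with $j=k+1$ to the family $a_{i+1}=a_i$ (yielding $X$) and with $j=i+1$ to the family $g_{i+1}\neq a_{i+1}\neq a_i$ (yielding $Y$, possible only when $e_i-e_{i+1}>1$), parametrized by the surplus count $t=1,\dots,m$. Your case analysis is in fact somewhat more explicit than the paper's, and the "main obstacle" you flag (exhaustiveness of the enumeration and the range of $t$) is handled in the paper only by assertion, so there is no gap relative to the published argument.
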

\begin{proof}
Let $A \in \F^{n \times n}$, and let $a_i$ denote the $i$-th largest invariant factor of $xI-A$.  Let $G$ be $i$-faithful but not $(i+1)$-faithful to $xI-A$.

If $g_i = g_{i+1}$ then it is impossible for $G$ to be $i$-faithful and not $(i+1)$-faithful to $xI-A$, because $a_{i+1} | a_i$ and $g_j | a_j$ for all $j$.  Therefore, if $G$ is $i$-faithful to $xI-A$ and $g_i = g_{i+1} = a_i$ then $a_{i+1} = a_i = g_{i+1}$ and $G$ is $(i+1)$-faithful to $xI-A$.

The set $X$ results from the application of Lemma \ref{lem:post-hoc-prob1} to all $A$ such that $a_{i+t} = a_i$ for all $1 \leq t \leq m$.

The set $Y$ results from the application of Lemma \ref{lem:post-hoc-prob1} to all $A$ such that $a_{i+1} \neq a_i$ and $g_{i+1} \neq a_{i+1}$.  That is, $G$ missed the factor $a_{i+1}$ entirely, which is only possible if $e_i - e_{i+1} > 1$.
\end{proof}

Note that Lemma \ref{lem:post-hoc-boundi} applies to $i = 0$, because all generators are $0$-faithful and there is a chance that $G$ is not $1$-faithful.  Also, $m$ as defined in Lemma \ref{lem:post-hoc-boundi} is given for the purposes of bounding the number computations.  For almost all situations, using $t = 1$ for the calculation gives the minimum probability for a given $i$.

Applying Lemma \ref{lem:post-hoc-if} means that a lower bound for $P_{q,b,r}(G, n)$ can be computed by computing a bound for each primary component and taking the max, applying Lemma \ref{lem:post-hoc-boundi} as described in the following theorem.

\begin{theorem}
\label{thm:post-hoc}
Let $f \in \P$ be an irreducible polynomial of degree $d$.  Let $G \in \P^{b \times b}$, and let $\det(G) = f^e$.  Then, 
\[
P_{q,b,r}(G, n) \geq 1 - \max(\{M_{q, n}(G, i) ~|~ 0 \leq i < r\}).
\]
\end{theorem}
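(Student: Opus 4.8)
The plan is to combine the two reduction lemmas of this section: Lemma~\ref{lem:post-hoc-if}, which reduces a lower bound on $P_{q,b,r}(G,n)$ to the primary components, and Lemma~\ref{lem:post-hoc-boundi}, which for a single primary component bounds the probability that the \emph{first} error sits at index $i+1$. Since the statement of the theorem already fixes attention on a single primary component (it assumes $\det(G)=f^e$ for a single irreducible $f$), the use of Lemma~\ref{lem:post-hoc-if} here is only to justify that this single-component case is the relevant building block; the heart of the argument is to assemble the index-by-index bounds $M_{q,n}(G,i)$ into a bound on $P_{q,b,r}(G,n)$.

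The key step is the observation that ``$G$ fails to be $r$-faithful to $xI-A$'' decomposes as a disjoint union over the location of the first error: if $G$ is not $r$-faithful, then there is a unique least index $i+1\le r$ at which $g_{i+1}\ne a_{i+1}$, and for that index $G$ is $i$-faithful but not $(i+1)$-faithful. Recalling the definition $P_{q,b,r}(G,n)=1-\max\{P_{q,b,r}(G,A)\mid A\in\mathcal{A}\}$, where $\mathcal{A}$ is the set of candidate matrices $A$ with $g_j\mid a_j$ for all $j$ and $G$ not $r$-faithful to $xI-A$, I would partition $\mathcal{A}=\bigcup_{i=0}^{r-1}\mathcal{A}^{(i)}$, where $\mathcal{A}^{(i)}$ collects the $A$ whose first error occurs at index $i+1$. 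Then
\[
\max_{A\in\mathcal{A}}P_{q,b,r}(G,A)=\max_{0\le i<r}\ \max_{A\in\mathcal{A}^{(i)}}P_{q,b,r}(G,A)\le \max_{0\le i<r}M_{q,n}(G,i),
\]
the last inequality being exactly the content of Lemma~\ref{lem:post-hoc-boundi} (together with its definition of $M_{q,n}(G,i)$ as the max of $P_{q,b,r}(G,A)$ over $A\in\mathcal{A}_i$ with the first error at $i+1$). Substituting into the displayed formula for $P_{q,b,r}(G,n)$ yields
\[
P_{q,b,r}(G,n)=1-\max_{A\in\mathcal{A}}P_{q,b,r}(G,A)\ \ge\ 1-\max_{0\le i<r}M_{q,n}(G,i),
\]
which is the claim.

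The main obstacle — and the point that needs care rather than calculation — is verifying that the partition of $\mathcal{A}$ by first-error index is genuinely exhaustive and that each $\mathcal{A}^{(i)}$ is contained in the set over which $M_{q,n}(G,i)$ was defined to maximize. Exhaustiveness is immediate: any $A$ with $G$ not $r$-faithful has some invariant among the first $r$ that $G$ gets wrong, hence a least such index. Containment requires matching the divisibility hypothesis $g_j\mid a_j$ (built into $\mathcal{A}$) with the hypotheses of Lemma~\ref{lem:post-hoc-boundi}, and noting that when $g_i=g_{i+1}$ the index $i+1$ cannot be a first-error position (so those $i$ contribute $M_{q,n}(G,i)=0$ and harmlessly enlarge the max). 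One should also remark that the bound is vacuously sharp in the extreme case $\deg(\det(G))=n$, where $\mathcal{A}=\emptyset$ and the right-hand side is $1$; this is consistent since then $m=0$ and every set $X$, $Y$ in Lemma~\ref{lem:post-hoc-boundi} is empty, forcing $M_{q,n}(G,i)=0$ for all $i$.
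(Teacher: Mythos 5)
Your proposal is correct and follows exactly the route the paper intends: the paper in fact states Theorem \ref{thm:post-hoc} without a written proof, relying on the preceding sentence that one applies Lemma \ref{lem:post-hoc-if} per primary component and Lemma \ref{lem:post-hoc-boundi} per first-error index, which is precisely the partition of $\mathcal{A}$ by the least index of disagreement that you spell out. Your explicit verification of exhaustiveness, of the containment in the sets over which $M_{q,n}(G,i)$ maximizes, and of the degenerate case $\deg(\det(G))=n$ supplies the details the paper leaves implicit.
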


As in the earlier example, let $g_1 =\ldots =g_9 = x^2+x$ and $g_{10} = x$.  Let $H$ be the primary component of $G$ associated with $x+1$.
There are two possible places where the first error could occur, at the $h_1$ and $h_{10}$.
Applying Lemma \ref{lem:post-hoc-boundi},
$M_{2,10^9}(H, 0) \approx 0.002$ and 
$M_{2,10^9}(H, 10) \approx 0.578$.
Therefore, by Theorem \ref{thm:post-hoc}, $P_{2,10,10}(G,10^9) \geq 1 - 0.578 = 0.422$.  
This is a dramatic improvement over the a priori bound, $P_{2,10,10}(10^9) \approx 0.00004$.

\section{Post-Hoc Example}
\label{sec:post-hoc-ex}

In \cite{NSSY15} the Ding-Yuan family of matrices were among those studied, with the goal of developing a formula for their ranks over the field $\F_3.$
We discuss two examples from that sequence, showing that the rank is learned with high confidence by using block Wiedemann with no preconditioning.
Here $q = 3, n = 27$ and $A \in \F^{n \times n}$ is as defined in \cite{NSSY15}.  If we set out to compute the first $r = 3$ invariant factors of $A$ and chose a block size of $b = 5$, we may calculate that $P_{q,b,r}(A) \geq P_{q,b,r}(n) \approx 0.715$.  Selecting random $U, V \in \F^{n \times b}$, we computed $G$ such that $G$ minimally generates $S = \{U^T A^i V\}_i$, and leading invariant factors of $\snf(G)$ are $(xfg, xf, x, x, 1)$, where $f = x^2 + 1,$ and $g = x^4 + x^3 + x + 2$.  
$G$ is clearly not 5-faithful since the total degree of the invariants is 12 while the total degree of the invariants of $xI-A$ will be 27.  But it will suffice 
that $G$ be 3-faithful
to infer the the full list of invariants and thus the desired rank.  The a priori probability of 3-faithfulness is 0.715.
We are able to improve our confidence in the faithfulness of $G$ using the post-hoc analysis developed in the Section \ref{sec:post-hoc}.

The candidates for first errors are: missing a higher power of $x$, $f$, or $g$ in the first invariant, missing $g$ in the second, or missing $f$ in the third.  Bounds on those possibilities are,
\begin{eqnarray*}
M_{q, n}(G_x, 0) &\leq& \max(\{Q_{q, b, 0}(0) Q_{q, b, 0}(t) ~|~ t = 1...15\}) = 0.0082 \\
M_{q, n}(G_f, 0) &\leq& \max(\{Q_{q^2, b, 0}(0) Q_{q^2, b, 0}(t) ~|~ t = 1...7\}) = 0.0000338 \\
M_{q, n}(G_g, 0) &\leq& \max(\{Q_{q^4, b, 0}(0) Q_{q^4, b, 0}(t) ~|~ t = 1...3\}) = 0.000000000573\\
M_{q, n}(G_g, 1) &\leq& \max(\{Q_{q^4, b, 0}(0) Q_{q^4, b, 1}(1 + t) ~|~ t = 1...3\}) = 0.000000047 \\
M_{q, n}(G_f, 2) &\leq& \max(\{Q_{q^2, b, 0}(0) Q_{q^2, b, 2}(2 + t) ~|~ t = 1...7\}) = 0.0031
\end{eqnarray*}

Therefore, the probability that the 3 largest invariant factors of $G$ were computed unfaithfully from any matrix $A \in \F^{n \times n}$ is at most $0.0082$, and our confidence in $G$ being 3-faithful is at least $P_{q,b,r}(G,n) \geq 0.9918$, compared to $P_{q,b,r}(n) \approx 0.715$ a priori.

Taking another example from the Ding-Yuan family.
we have $n = 3^9$ and $A \in \F^{n \times n}$.  We chose a block size of $10$ hoping to get as many as $9$ invariant factors of $xI-A$ correct,
$P_{3,10,9}(n) \approx 0.314$.  Let $G$ be the result of the block Wiedemann computation with $\snf(G) = ( x a^2 b^2 c d, x a^2 b^2 c, x a^2 b^2, x a^2 b^2, x, \ldots, x, 1, 1)$ where $a,b,c,d \in \P$ are distinct irreducibles and $\deg(a) = \deg(b) = 1$, $\deg(c) = 537$, and $\deg(d) = 139$.  
Because the 5th invariant factor of $G$ is irreducible, if we are confident that $G$ is 5-faithful to $A$, then we are confident that we have computed the entire eigenstructure of $A$.  
The worst case is that $G$ misses $a$ or $b$ in the fifth invariant. Thus the probability of 5-faithfulness is at least $P_{3,10,5}(G,n) \geq 1-Q_{3,10,5}(4) \approx 0.996$.  So in fact we can be this confident that the two 1's should be $x$'s. Note also that in this case, since 5 invariants suffice to determine the entire structure, we do nearly as well using the a priori bound
for 5-faithfulness: $P_{3,10,5}(n) \approx 0.988$.

\section{Conclusion}
We have extended the tight bounds of \cite{Brent:2003} for probability of faithfulness of the invariants of a projected sequence so that the bounds apply regardless of the total number of invariants (theorem \ref{theorem:brent}).
The strategy was to exactly compute the probability as a function of the invariant factor structure of the given matrix (theorem \ref{theorem:worst}).  
We can compute the necessary block size required to obtain any desired confidence of obtaining the leading $r$ invariant factors. We show, using block size slightly larger than $r$, that the leading $r$ invariant factors are correct with high probability over any field.

In addition to the a priori bounds, 
we have provided tools for calculating sharper bounds when something is known about the invariants.  
A post hoc analysis (Section \ref{sec:post-hoc}) can often assert a much higher probability of correctness than is available a priori.  
Conditions sufficient for a strong post hoc analysis --- such as few nontrivial invariants --- often hold in practice. When that doesn't apply, preconditioning can be used.  
However the preconditioners that have been proposed and analyzed
focus on the relation of the minimal polynomial to the characteristic polynomial.  
Invariably they require a large field for high worst case probability of success 
(all invariants beyond the minimal polynomial equal to $x$ or 1). 
Observe that, in view of the analysis here, block Wiedemann can be used over small fields and with no preconditioner or a preconditioner used heuristically.  If the condition of few nontrivial invariants is achieved, the method is successful even with very modest block size used.

Consider, for example, a standard rank algorithm: use diagonal preconditioning and apply Wiedemann (block size of one) \citep{CEKTSV02}.  The result is a minimal polynomial of the form $xf(x)$ where $f(0) \neq 0$.  The expectation is that this minpoly is a shift of the characteristic polynomial so that the rank is $\deg(f)$.  
The algorithm works very well over large fields but fails utterly over $\F_2$.  Blocking increases the probability of a correct minpoly,
removes the requirement to believe that charpoly is $x^e$ times minpoly, 
and does not require any proof of the efficacy of the preconditioner (when successful as determined by post hoc analysis).  This is very helpful in the typical situation where the given matrix is far from a worst case.

Similar points apply to other linear algebra problems.  For example, determinant, rank, solving nonsingular and singular consistent systems, nullspace random sampling and nullspace basis are all problems that can be usefully attacked using block Wiedemann, very often with small block size and little or no preconditioning.  Strong probability of success can be had for any field size and without relying on preconditioner worst case probability analysis.

\bibliographystyle{plainnat}
\bibliography{linalg,issac}

\end{document}